\newcommand{\tr}{^\top}
\newcommand{\norm}[1]{\left\| #1 \right\|}
\newcommand{\resp}[1]{{\color{black}  #1}}
\let\myorg@bibitem\bibitem
\def\bibitem#1#2\par{%
	\@ifundefined{bibitem@#1}{%
		\myorg@bibitem{#1}#2\par
	}{%
		\begingroup
		\color{\csname bibitem@#1\endcsname}%
		\myorg@bibitem{#1}#2\par
		\endgroup
	}%
}
\newcommand*{\bibitem@schwenkel}{black}    
\newcommand*{\bibitem@dyer}{black}   
\newcommand*{\bibitem@mpt}{black}  
\newcommand*{\bibitem@kohlerLinear}{black}
\newcommand*{\bibitem@langsonRobust}{black}
\begin{document}

\title{Scalable tube model predictive control of uncertain linear systems using ellipsoidal sets }

\author[1]{Anilkumar Parsi*}
\author[1]{Andrea Iannelli}
\author[1]{Roy S. Smith}

\authormark{Anilkumar Parsi \textsc{et al}}

\address{The authors are affiliated with \orgdiv{Automatic Control Laboratory},
\orgname{ Swiss Federal Institute of Technology (ETH Z\"urich)},
\orgaddress{\state{Z\"urich}, \country{Switzerland}.}}

\corres{Anilkumar Parsi,
Physikstrasse 3, Instit\"ut f\"ur Automatik, 8092 Z\"urich, Switzerland. 
\email{aparsi@control.ee.ethz.ch}}


\abstract[Summary]{This work proposes a novel robust model predictive control (MPC) algorithm for linear systems affected by dynamic model uncertainty and exogenous disturbances. The uncertainty is modeled using a linear fractional perturbation structure with a time-varying perturbation matrix, enabling the algorithm to be applied to a large model class. The MPC controller constructs a state tube as a sequence of parameterized ellipsoidal sets to bound the state trajectories of the system. The proposed approach results in a semidefinite program to be solved online, whose size scales linearly with the order of the system. The design of the state tube is formulated as an offline optimization problem, which offers flexibility to impose desirable features such as robust invariance on the terminal set. This contrasts with most existing tube MPC strategies using polytopic sets in the state tube, which are difficult to design and whose complexity grows combinatorially with the system order. The algorithm guarantees constraint satisfaction, recursive feasibility, and stability of the closed loop. The advantages of the algorithm are demonstrated using two simulation studies.}

\keywords{Robust model predictive control, uncertain linear systems, linear fractional transformations, semidefinite programming, ellipsoidal sets }

\maketitle
\section{Introduction}
Model predictive control (MPC) is one of the most popular modern control strategies because it offers, through its receding horizon implementation, a useful trade-off between optimality and computational complexity \cite{borelli2017model}. The flexibility offered by the control design process and the systematic handling of system constraints has resulted in wide adoption of MPC in diverse fields such as robotics, process control and automotive control \cite{forbes2015model} \cite{hrovat2012development} \cite{darby2012mpc}. MPC controllers use a model of the system dynamics to optimize over control performance while ensuring constraint satisfaction and stability of the closed loop. The models that are used in practice however do not perfectly describe the underlying dynamics. This is a well-known issue in MPC literature \cite{morari1999model}, which has been studied under the fields of robust and stochastic MPC \cite{kouvaritakis2015model}. Because these techniques provide closed loop guarantees even with inexact models of systems, they are also used in recent advanced MPC algorithms such as safe learning-based MPC \cite{hewing2020learning} \cite{zanon2020safe} and robust adaptive MPC \cite{lorenzen2019robust,parsi2022,parsi2021distributed}. 

The main goal of robust MPC is to design a controller with desired properties such as constraint satisfaction, closed loop stability and good performance when the model is subject to dynamic uncertainties, exogenous disturbances or both. Early strategies included the tightening of state and input constraints to account for the effects of disturbances on the system \cite{chisci2001systems} \cite{gossner1997stable}. In addition, approaches such as multi-scenario MPC have been proposed to handle model uncertainties, whereby a scenario tree is built to compute control inputs for each possible realization of model uncertainty \cite{maiworm2015scenario} \cite{lucia2013multi}. In this work, we focus on another popular class of robust MPC methods known as tube MPC. 

In tube MPC, the effects of model uncertainty and disturbances on the state trajectories are captured using a sequence of sets, called the state tube. Using set-theoretic concepts, the state tube is constructed as a function of online optimization variables such that it contains all possible future trajectories of the system \cite{blanchini2008set}. The state tubes, by construction, are required to satisfy the constraints, thereby ensuring robust constraint satisfaction. Although this method is an effective way to handle imperfect models, the sets defining the state tube must be parameterized in order to have a computationally tractable optimization problem. 

Similar to constraint tightening approaches, most of the early tube MPC techniques considered linear systems affected by either additive disturbances \cite{mayne2005robust} \cite{rakovic2012homothetic} \cite{lee1999constrained} or multiplicative model uncertainty \cite{fleming2014robust}. The main difference between the various approaches is in the parameterization used to construct the state tube. The simplest of these approaches, called rigid tube MPC, uses translations of a set of a fixed size in the state space to construct the state tube \cite{mayne2005robust}. Homothetic tube MPC approaches use translations and scalings of a predefined set, which gives a larger region of attraction compared to rigid tube MPC controllers \cite{rakovic2012homothetic}. A class of methods, known as elastic tube MPC, uses a fixed number of hyperplanes along predefined directions to construct polytopic sets, allowing the state tube to take arbitrary shapes \cite{fleming2014robust,rakovic2016elastic}. More recently, the homothetic \cite{langsonRobust,lorenzen2019robust} and elastic tube MPC \cite{lu2021robust} strategies have been extended to systems affected by both model uncertainty and disturbances, in the context of robust adaptive MPC. 

All aforementioned tube MPC approaches have in common that the state tube is parameterized as a sequence of polytopic sets. Such a parameterization of the state tube allows the formulation of the set dynamics as linear constraints, and results in convex quadratic programs to be solved online. Despite the apparent simplicity of online computation, using a polytopic parameterization  has two main disadvantages. First, the number of hyperplanes and vertices required to describe a polytope can grow combinatorially with the state dimension, affecting the scalability of the algorithm due to the large number of constraints and variables in the online optimization.  Moreover, to guarantee closed loop stability, the chosen polytope parameterizations are often assumed to be robustly invariant\cite{lorenzen2019robust} or contractive\cite{lu2021robust}. This further complicates the control design, because the computation of polytopic invariant sets is a difficult problem. Iterative algorithms have been proposed to construct invariant polytopes for systems affected by additive disturbances \cite{kolmanovsky1998theory}, multiplicative uncertainty \cite{pluymers2005efficient} or both \cite{didier2021}. Although the methods in \cite{kolmanovsky1998theory} \cite{pluymers2005efficient} are guaranteed to result in polytopes with finite number of constraints, this number can be arbitrarily large. Moreover, such guarantees have not been proven for systems affected by additive and dynamic uncertainties\cite{didier2021}.

An alternative way to parameterize the state tube is to use ellipsoidal sets. Whereas the number of hyperplanes or vertices defining a polytope grows combinatorially with the number of dimensions, an ellipsoid can be defined by a single conic constraint. In addition, the design of ellipsoidal sets can be formulated as a single convex optimization problem, instead of iterative procedures used for polytope design. These advantages are well known in the control community, and have resulted in ellipsoid based robust MPC approaches. Early methods have proposed to use a single ellipsoidal set to approximate the region of attraction of MPC controllers \cite{kothare1996,kouvaritakis2000effic}. An improved design has been proposed in \cite{smith2004robust}, where sequence of ellipsoidal sets is used to propagate the state dynamics instead of a single set. This technique is similar to the \resp{polytopic} rigid tube MPC in \cite{mayne2005robust} and could be applied to a wider model class.  An ellipsoidal tube MPC approach has also been proposed for output feedback with imperfect state measurements in \cite{yang2019tube}. However, the proposed method does not consider state constraints, and assumes perfect knowledge of the model at the current time step. Moreover, the online optimization problems in \cite{smith2004robust,yang2019tube} are semidefinite programs which grow quadratically with the system order, potentially leading to large computational demands. \resp{Recently, ellipsoidal sets have also been used to perform tube MPC for systems affected by multiplicative uncertainties using integral quadratic constraints \cite{schwenkel}. The advantage of this new approach is that a broad uncertainty class, including also dynamic uncertainty and several nonlinearities, can be captured in a less conservative way than existing schemes. However, the resulting online optimization problem is nonconvex, and the offline design is cumbersome and a systematic procedure for the computation of MPC components is not yet available. }
Ellipsoidal tubes have also been used for nonlinear control under assumptions of known models \cite{cannon2011robust}, and in the context of learning-based MPC with nonlinear models and unstructured uncertainty \cite{koller2018learning}.


In this work, we propose a novel ellipsoidal tube MPC algorithm which uses a homothetic tube to propagate the set-dynamics. The algorithm can be applied to linear systems affected by time-varying model uncertainty and exogenous disturbances, where the uncertainty is described in the form of a linear fractional transformation \cite{cockburn1997linear}. The proposed algorithm has offline and online phases, each of which requires the solution of convex optimization problems. The offline optimization problem solves a semidefinite program combined with a 
\resp{line} search over \resp{a scalar parameter}. The size of the offline optimization problem grows quadratically with the system order. The online optimization problem is a convex semidefinite program. The algorithm guarantees constraint satisfaction, recursive feasibility, and stability of the closed loop.

\resp{
The proposed method has three distinct advantages compared to most existing works.  The first one is the scalability of the online optimization problems compared to both polytopic \cite{lorenzen2019robust,lu2021robust} and ellipsoidal tube MPC \cite{smith2004robust,yang2019tube} approaches in the literature. In the proposed approach, the online optimization is a semi-definite program whose size grows linearly with the order of the system. The second advantage is that the design of the state tube shape is flexible, and can be performed by solving an optimization problem offline. Various desirable properties, such as robust invariance or $\lambda$-contractivity, can be imposed on the ellipsoidal sets using simple reformulations of the optimization problem. Finally, the uncertainty class considered here is general and can be combined with both grey-box identification techniques \cite{Ljung1999} and black-box identification techniques such as least squares estimation \cite{dean2020sample}. Such a flexibility in representing uncertainty results in tighter propagation of state evolution, and thereby, improved region of attraction compared to most of the existing polytopic tube MPC methods \cite{rakovic2012homothetic,lorenzen2019robust,lu2021robust}. Two simulation examples are used to highlight the advantages of the controller. First, by applying the proposed algorithm on mass-spring-damper systems of increasing size, the scalability of the algorithm is demonstrated. In the second simulation example, a controller is designed using the proposed algorithm for a quadrotor with uncertain mass and affected by a wind disturbance, and the performance is compared with a polytopic tube MPC algorithm \cite{kohlerLinear}.}

\subsection{Notation and background lemmas}
The sets of real numbers, non-negative real numbers and positive real numbers are denoted by $ \mathbb{R} $, $ \mathbb{R}_{\ge0} $ and $ \mathbb{R}_{>0} $  respectively. The sequence of integers from $ n_1 $ to $ n_2 $ is represented by $ \mathbb{N}_{n_1}^{n_2} $. 
For a vector $ b $ and a matrix $ A $,  $\norm{b}_k$ represents the $k-$norm for $k\in\{2,\infty\}$, and $ \norm{b}_A^{2} $ represents $ b\tr A b $. The $ i ^{th}$ row of a matrix $ A $ is denoted by $ [A]_{i} $, and $ A\preccurlyeq 0 $ denotes that $ A $ is a negative semidefinite matrix. For two square matrices $ A,B $ the notation $ \text{diag}\{A,B\} $ denotes the block diagonal matrix formed by $ A$ and $B $. The Minkowski sum of two sets $ \mathcal{A} $ and $ \mathcal{B}  $ is denoted by $ \mathcal{A} \oplus \mathcal{B} $, and $ \mathbf{1} $ denotes a column vector of appropriate length whose elements are equal to 1.  The notation $ a_{l|k} $ denotes the value of $ a $ at time step $ k+l $ computed at the time step $ k $. The identity matrix of size $n\times n$ is denoted by $I_n$. In a symmetric matrix, a $ \star $ in a lower-triangular element denotes that the value is the transpose of the corresponding upper-triangular element.  A continuous function $ \alpha: \mathbb{R}_{\ge 0} \rightarrow \mathbb{R}_{\ge 0}  $ is a $ \mathscr{K} $ function if $ \alpha(0) = 0 $, $ \alpha(s) >0 $ for all $ s>0 $ and it is strictly increasing. 
A continuous function $ \beta:\mathbb{R}_{\ge 0} \times \mathbb{N}_{0}^{\infty} \rightarrow \mathbb{R}_{\ge 0} $ is a $ \mathscr{K} \mathscr{L} $ function if $ \beta(s,t) $ is a $ \mathscr{K}$ function in $ s $ for every $ t\ge 0 $, it is strictly decreasing in $ t $ for every $ s>0 $ and $ \beta(s,t) \rightarrow 0$ when $ t \rightarrow \infty $.

\begin{lemma}[\cite{dullerud2002nonlinear}]\label{Lem:Quadr}
	The quadratic constraint in the variable $ x\in \mathbb{R}^{n} $ defined as $ x\tr X x + 2 y\tr x + z \le 0 $ is satisfied for all $ x $, if and only if the matrix $ \begin{bmatrix}
	X & y \\
	\star & z
	\end{bmatrix} $ is negative semidefinite.
\end{lemma}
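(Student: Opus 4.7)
The plan is to rewrite the scalar quadratic form as a homogeneous quadratic form in the augmented vector $\xi = \begin{bmatrix} x\tr & 1\end{bmatrix}\tr \in \mathbb{R}^{n+1}$. A direct block-multiplication shows
\begin{equation*}
x\tr X x + 2 y\tr x + z \;=\; \xi\tr M \xi, \qquad M := \begin{bmatrix} X & y \\ y\tr & z \end{bmatrix},
\end{equation*}
so the lemma reduces to relating negativity of $\xi\tr M \xi$ on the affine slice $\{\xi : \xi_{n+1}=1\}$ with negativity on all of $\mathbb{R}^{n+1}$.

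For the ``if'' direction I would simply note that $M \preccurlyeq 0$ implies $\xi\tr M \xi \le 0$ for every $\xi$, and then specialize to $\xi_{n+1}=1$ to recover the original inequality for all $x$. This step is essentially one line.

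For the ``only if'' direction, assume $x\tr X x + 2 y\tr x + z \le 0$ for every $x\in\mathbb{R}^n$. I would establish $\xi\tr M \xi \le 0$ for every $\xi=\begin{bmatrix} u\tr & t\end{bmatrix}\tr$ by splitting cases on $t$. When $t\neq 0$, set $x = u/t$ and use the hypothesis together with the homogeneity identity $\xi\tr M \xi = t^2\bigl((u/t)\tr X (u/t) + 2 y\tr (u/t) + z\bigr)$, which is clearly nonpositive. When $t=0$ the claim becomes $u\tr X u \le 0$, which is obtained by a limiting argument: apply the hypothesis at $x=\lambda u$ for $\lambda>0$, giving $\lambda^2 u\tr X u + 2\lambda y\tr u + z \le 0$, divide by $\lambda^2$, and take $\lambda \to \infty$ to conclude $u\tr X u \le 0$. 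Combining the two cases yields $M\preccurlyeq 0$.

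I do not expect any real obstacle; the only subtlety is the $t=0$ case in the reverse direction, where one must rule out the possibility that $X$ has a positive eigenvector that is ``masked'' by the linear and constant terms, and this is precisely what the $\lambda\to\infty$ argument handles. The result can also be viewed as a degenerate instance of the S-procedure (with no multiplier needed, since there is no side constraint), and will be invoked later in the paper to convert quadratic inequalities arising from the ellipsoidal tube parameterization into linear matrix inequalities.
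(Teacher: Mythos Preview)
Your argument is correct and complete: the homogenization $\xi = \begin{bmatrix} x\tr & 1\end{bmatrix}\tr$ together with the $t\neq 0$ scaling and the $\lambda\to\infty$ limit for $t=0$ is exactly the standard way to establish this equivalence, and there is no gap. Note, however, that the paper does not actually prove this lemma; it is stated as a background result with a citation to Dullerud and D'Andrea, so there is no ``paper's own proof'' to compare against---your proof simply supplies what the paper takes as known.
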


\begin{lemma}(S-procedure \cite{boyd1994linear}) \label{Lem:S_Proc}
	Consider $ m+1 $ quadratic functions in a variable $ x \in \mathbb{R}^{n} $ denoted as $ F_i(x) $ for $ i\in \mathbb{N}_{0}^{m} $. If there exist $ m $ scalars $ \tau_i \in \mathbb{R}_{\ge0} $ for $ i\in \mathbb{N}_{1}^{m} $ such that
	\begin{align*}
	\forall x, \: F_{0}(x) - \sum_{i=1}^{m} \tau_i F_i(x) \le 0, 
	\end{align*}
	then $ F_0(x) \le 0 $ for all $ x $ such that $ F_i(x)\le 0, i=1,\ldots,m $. If $ m=1 $ and there exists $\hat{x}$ such that $ F_1(\hat{x})<0 $, then this condition is necessary and sufficient.
\end{lemma}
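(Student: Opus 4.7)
The plan is to prove the two parts of the lemma separately: the sufficiency direction, which holds for general $m$, is an immediate algebraic consequence; the necessity direction, which holds only for $m=1$ under the Slater-type condition, is a classical separation argument based on convexity of images of quadratic forms.

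For sufficiency, I would argue as follows. Suppose the non-negative multipliers $\tau_i$ exist with $F_0(x) - \sum_{i=1}^m \tau_i F_i(x) \le 0$ for every $x \in \mathbb{R}^n$. Fix any $\bar{x}$ satisfying $F_i(\bar{x}) \le 0$ for all $i = 1, \dots, m$. Since each $\tau_i \ge 0$, the sum $\sum_{i=1}^m \tau_i F_i(\bar{x})$ is non-positive, hence $-\sum_{i=1}^m \tau_i F_i(\bar{x}) \ge 0$. Adding this non-negative term to $F_0(\bar{x})$ yields $F_0(\bar{x}) \le F_0(\bar{x}) - \sum_{i=1}^m \tau_i F_i(\bar{x}) \le 0$, which is exactly the desired implication. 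This part is direct and requires no convexity.

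For the converse (necessity) statement, specialised to $m = 1$, I would proceed by a separating hyperplane argument in the image plane of the two quadratic forms. Consider the set $\mathcal{S} = \{(F_1(x), F_0(x)) : x \in \mathbb{R}^n\} \subset \mathbb{R}^2$. The premise that $F_0(x) \le 0$ whenever $F_1(x) \le 0$ says that $\mathcal{S}$ does not meet the open region $\{(u,v) : u \le 0, v > 0\}$. The key ingredient is that for two real quadratic forms the set $\mathcal{S} + \mathbb{R}_{\ge 0}^2$ (or a suitable enlargement) is convex; this is the classical Dines/Yakubovich fact on images of quadratic forms. Hence one can separate it from the convex cone $\{(u,v): u \le 0, v > 0\}$ by a hyperplane with normal $(\tau, -1)$ (normalising the second coordinate to $-1$), giving $F_0(x) - \tau F_1(x) \le 0$ for all $x$, with $\tau \ge 0$ by the orientation of the separation.

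The main obstacle is the necessity direction, specifically two technical points. First, one must invoke the convexity of the joint image of two homogeneous (after homogenisation of the affine terms) quadratic forms; this is non-trivial and is where the restriction $m=1$ enters, since the analogous statement fails for three or more forms. Second, one must rule out a purely vertical separating hyperplane, i.e. one whose normal has a zero second component, for otherwise the multiplier $\tau$ cannot be recovered with a finite value. This is precisely where the Slater-type assumption $F_1(\hat{x}) < 0$ is used: it forces $\mathcal{S}$ to contain a point with strictly negative first coordinate, so any separator of the required form must have a non-zero, and by sign considerations strictly negative, coordinate in the $F_0$-direction, allowing normalisation to $-1$ and extraction of a well-defined non-negative $\tau$.
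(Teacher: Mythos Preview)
The paper does not give a proof of this lemma: it is stated as a background result, with a citation to Boyd et~al.\ (\emph{Linear Matrix Inequalities in System and Control Theory}), and is used as a tool throughout the paper without being derived. So there is no ``paper's own proof'' to compare against.

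That said, your outline is the standard argument and is correct. The sufficiency direction is exactly the one-line computation you wrote. For necessity with $m=1$, the separation-in-the-image-plane approach via Dines' convexity theorem is the classical route, and you have correctly identified the two delicate points: (i) one must homogenise the quadratic \emph{functions} to quadratic \emph{forms} (e.g.\ by passing to $\tilde{x}=(x,1)$ and working on a ray/sphere) before invoking the joint-range convexity result, since Dines' theorem is stated for homogeneous forms; and (ii) the Slater condition $F_1(\hat{x})<0$ is precisely what excludes a vertical separator and allows normalisation to a finite $\tau\ge 0$. If you were to write this out in full, the homogenisation step is where most of the care is needed; your sketch acknowledges it but does not carry it through.
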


\begin{lemma}(Schur complement \cite{zhang2006schur})\label{Lem:Schur}
	Consider the symmetric matrices $ Q, R $. If $ Q \prec 0 $, then $ \begin{bmatrix}
		Q & S \\
		\star & R
		\end{bmatrix} \preccurlyeq 0 $ is satisfied if and only if $ R-S\tr Q^{-1} S \preccurlyeq 0 $.
\end{lemma}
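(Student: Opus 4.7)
The plan is to establish both directions simultaneously by showing that the block inequality is equivalent, via a nonsingular congruence, to a block diagonal inequality whose only nontrivial part is the Schur complement itself. First I would exploit the hypothesis $Q\prec 0$, which guarantees that $Q$ is invertible, to introduce the unit-triangular transformation $T = \begin{bmatrix} I & -Q^{-1}S \\ 0 & I \end{bmatrix}$. Since $T$ is nonsingular (its determinant is $1$), the definiteness of $M = \begin{bmatrix} Q & S \\ \star & R \end{bmatrix}$ is preserved under the congruence $M \mapsto T\tr M T$, so the original inequality $M\preccurlyeq 0$ holds if and only if $T\tr M T \preccurlyeq 0$.

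Next I would carry out the block multiplication to verify that
\begin{equation*}
T\tr M T = \begin{bmatrix} Q & 0 \\ 0 & R - S\tr Q^{-1} S \end{bmatrix},
\end{equation*}
where the off-diagonal blocks vanish because the cross terms $QQ^{-1}S$ and $S\tr Q^{-1}Q$ simplify to $S$ and $S\tr$, exactly cancelling the $S$ and $S\tr$ already present in $M$. This is the only piece of explicit algebra in the proof and is entirely routine.

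The final step is to use the standard fact that a symmetric block diagonal matrix is negative semidefinite if and only if each of its diagonal blocks is negative semidefinite. Since the first block $Q$ is negative definite by assumption (hence in particular $\preccurlyeq 0$), the condition $T\tr M T \preccurlyeq 0$ reduces exactly to $R - S\tr Q^{-1} S \preccurlyeq 0$. Chaining this with the congruence equivalence from the first step gives the claimed if-and-only-if. I do not foresee any real obstacle: the proof is essentially a one-line matrix computation packaged with the preservation of inertia under congruence, and the hypothesis $Q\prec 0$ is used only to ensure that $Q^{-1}$ exists so that $T$ is well-defined.
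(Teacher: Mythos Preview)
Your argument is correct and is the standard congruence proof of the Schur complement lemma. The paper does not actually prove this statement; it is listed as a background lemma with a citation to \cite{zhang2006schur}, so there is no in-paper proof to compare against.
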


\section{Problem formulation}
We consider uncertain linear, time-invariant systems of the form:
\begin{subequations}\label{eq:Dynamics}
\begin{align}
    x_{k+1} &= A x_{k} + B u_{k} + B_p p_k + B_w w_k, \label{eq:Dynamics1}\\
    q_k &= C_q x_k + D_u u_k + D_w w_k, \label{eq:Dynamics2}\\ 
    p_k &= \Delta_k q_k,\label{eq:Dynamics3} 
\end{align}
\end{subequations}
where $x_k \in \mathbb{R}^{n_x} $ represents the state of the system, $u_k \in \mathbb{R}^{n_u} $ represents the control inputs and $w_k \in \mathbb{R}^{n_w}$ represents an exogenous disturbance acting on the system's states. In addition, the uncertainty in the model is captured using a linear fractional transformation (LFT)\cite{cockburn1997linear}, described by the perturbation vectors $p_k, q_k \in \mathbb{R}^{n_{\Delta}}$ and the matrix $\Delta_k \in \mathbb{R}^{n_{\Delta} \times n_{\Delta}}$ with the block diagonal structure 
\begin{align}\label{eq:DeltaUncert}
    \Delta_k = \text{diag}\{\Delta^1_k, \Delta^2_k,\ldots, \Delta^\delta_k\}
\end{align}
where $ \Delta^j_k \in \mathbb{R}^{n_{\Delta_{j}}}, \forall j\in \mathbb{N}_{1}^{\delta} $, and the structure \eqref{eq:DeltaUncert} induces a similar partition on $ p_k $ and $ q_k $.  The perturbation vectors $ p_k, q_k $ and the matrix $\Delta_k$ cannot be measured, but $\Delta_k$ is known to lie inside the set
\begin{align}\label{eq:DeltaBound}
    \mathcal{D}:= \{ \Delta| \Delta\tr P_{\Delta} \Delta \preccurlyeq I_{n_{\Delta}} \}
\end{align}
for all $ k $, where $ P_{\Delta}\in \mathbb{R}^{n_\Delta\times n_\Delta} $ is a positive definite matrix. By defining the projection matrices $ \Pi_{j} $ which select the components of $ p_k,q_k $ corresponding to 
$ \Delta_k^j $ for $ j\in \mathbb{N}_{1}^{\delta} $, the bound on $ \Delta_k $ can also be represented by the set of inequalities
\begin{align}\label{eq:pq_Delta}
	p_{k}\tr \Pi_{j}\tr P_{\Delta} \Pi_{j} p_{k} \le q_k\tr \Pi_{j}\tr \Pi_{j} q_k, \quad \forall j \in \mathbb{N}_{1}^{\delta}.
\end{align}
The exogenous disturbance $w_k$ lies within the set
\begin{align}\label{eq:wBound}
        \mathcal{W}:= \{ w | w\tr P_{w} w \preccurlyeq I_{n_{w}} \},
\end{align}
where $ P_w \in \mathbb{R}^{n_w\times n_w}$ is a positive definite matrix. The states and inputs of the system must lie in a compact polytopic set containing the origin, defined as
\begin{align}\label{eq:Constraints}
    \mathcal{C}:= \{(x,u) | F x + G u \le \mathbf{1} \} ,
\end{align}
where $F\in\mathbb{R}^{n_c\times n_x}, G\in\mathbb{R}^{n_c\times n_u}$. The control task is regulation subject to a quadratic cost, i.e., given that the system is at a state $\hat{x}_0$ at the timestep $ k=0 $, control inputs $ \{u_k\}_{k=0}^{\infty}$ must be computed such that the the following cost is minimized
\begin{align}\label{eq:CostDefn}
 \sum_{k=0}^{\infty} \hat{x}_k\tr Q_x \hat{x}_k + u_k\tr Q_u u_k,
\end{align}
where $Q_x\in \mathbb{R}^{n_x\times n_x}, Q_u \in \mathbb{R}^{n_u\times n_u}$ are positive definite matrices and $ \{\hat{x}_k\}_{k=0}^{\infty} $ represents the true state trajectory of the system. However, the cost defined in \eqref{eq:CostDefn} cannot be optimized over, since the true state trajectory depends on the realizations of the uncertainty and disturbances to be observed in the future. Moreover, using the infinite horizon state and input trajectories in the optimization problem results in an infinite number of variables.



In light of these difficulties, model predictive control (MPC) is used to find suboptimal input sequences \cite{borelli2017model}. In this approach, a receding horizon strategy is used  where the control inputs over the next $ N $  timesteps (called the prediction horizon) are optimized while ensuring that the state after $ N $ timesteps reaches a predefined terminal set $ \mathcal{X}_T $. The set $ \mathcal{X}_T $ is designed such that it is robust positively invariant under a predefined stabilizing terminal controller. Moreover, to deal with the uncertainty in the prediction of the future states, tube MPC\cite{kouvaritakis2015model} is used. In this approach, a sequence of sets $ \{\mathcal{X}_{l|k}\}_{l=0}^{N} $ called the state tube is constructed, which encompasses all trajectories of the system that can be generated by the input sequence $ \{u_{l|k}\}_{l=0}^{N-1} $ for any $ \Delta_{l|k} {\in} \mathcal{D}, w_{l|k} {\in} \mathcal{W}, \: l {\in} \mathbb{N}_{0}^{N-1} $.  Thus, an optimization problem of the following form is solved at each time step $k$ using the available state measurement $ \hat{x}_k $
\begin{subequations}\label{eq:RMPC}
	\begin{align}
	\displaystyle\min_{ \{u_{l|k}\}_{l=0}^{N-1},\{\mathcal{X}_{l|k}\}_{l=0}^{N} } \: \sum_{l=0}^{N-1} &\left(J(\mathcal{X}_{l|k},u_{l|k})\right) \:  + \: J_T(\mathcal{X}_{N|k}), \label{eq:RMPC1} \\
	\text{s.t.} \qquad\qquad \qquad	\hat{x}_k &\in \mathcal{X}_{0|k} .\label{eq:RMPC2}\\
	 \quad  q_{l|k} &= C_q x_{l|k} + D_u u_{l|k} + D_w w_{l|k}, \label{eq:RMPC3}\\ 
	p_{l|k} &= \Delta_{l|k} q_{l|k}, \label{eq:RMPC4}\\
	\mathcal{X}_{l+1|k} &\supseteq A \mathcal{X}_{l|k} \oplus B u_{l|k} \oplus B_p p_{l|k} \oplus B_w \mathcal{W}  ,\quad \forall \Delta_{l|k} \in \mathcal{D}, \forall w_{l|k} \in \mathcal{W}   \label{eq:RMPC5}\\
	F x_{l|k} + G u_{l|k} &\le \mathbf{1}, \qquad \forall x_{l|k} \in \mathcal{X}_{l|k}, \: l\in\mathbb{N}_{0}^{N-1}, \label{eq:RMPC6}\\
	\mathcal{X}_{N|k} &\subseteq \mathcal{X}_T \label{eq:RMPC7}, 	
	\end{align}
\end{subequations}
where $ J(\cdot,\cdot) $ and $ J_T(\cdot) $ represent the stage and terminal cost functions which are defined based on the state tube. The optimization problem \eqref{eq:RMPC} has been extensively studied in robust MPC literature with various parameterizations of the state tube. That is, instead of arbitrarily optimizing over the shapes $ \{\mathcal{X}_{l|k}\}_{l=0}^{N} $, they are parameterized using predefined sets. Some examples include translation of a polytope \cite{mayne2005robust}, translation of an ellipsoid, translation and scaling of a polytope \cite{rakovic2012homothetic} and using polytopes with hyperplanes along predefined directions \cite{fleming2014robust}. In this work, a novel way to parameterize the state tube is proposed, wherein ellipsoids of fixed shape are translated and scaled using online optimization variables. 

\section{MPC component design}\label{Sec:DesignTubes}
The robust MPC optimization problem \eqref{eq:RMPC} depends on the sets $ \{\mathcal{X}_{l|k}\}_{l=0}^{N} $ and the control inputs $ \{u_{l|k}\}_{l=0}^{N-1} $. This optimization problem must be solved online at each time step, and hence a computationally tractable approximation of \eqref{eq:RMPC} is desired. To this aim, the control inputs will be parameterized using an affine control law, and the state tube will be parameterized using a predefined ellipsoidal set. 
In addition, the terminal set and cost function will be designed to ensure that the closed loop is stable and \eqref{eq:RMPC} is recursively feasible. 

\subsection{Parameterization of control inputs and state tube}\label{Sec:Par}

The sets $ \{\mathcal{X}_{l|k}\}_{l=0}^{N} $ are parameterized using the predefined ellipsoid
\begin{align}\label{eq:Ellipsoid_X0}
\bar{\mathcal{X}}  := \{x | x\tr P x \le 1\} = \{x | \norm{Lx}_{2}\le 1 \}, 
\end{align}
where $ P \in \mathbb{R}^{n\times n} $ is a symmetric positive definite matrix that defines the shape of the ellipsoid and $ L $ is obtained using the Cholesky factorization of $ P {=} L\tr L $. Using the translation variables $ z_{l|k} {\in} \mathbb{R}^{n} $ and scaling variables $ \alpha_{l|k} {\in} \mathbb{R}_{> 0} $, the state tube is parameterized as
\begin{align}\label{eq:TubeParameterization}
\begin{split}
\mathcal{X}_{l|k} := \mathcal{X}(z_{l|k},\alpha_{l|k}) &:= z_{l|k} \oplus \alpha_{l|k}\bar{\mathcal{X}} \\
&= \{x | (x-z_{l|k})\tr P (x-z_{l|k}) \le \alpha_{l|k}^{2} \} \\
&= \left\{x \left | \: \norm{L (x-z_{l|k})}_{2} \le \alpha_{l|k} \right. \right\} , \quad \forall l\in \mathbb{N}_{0}^{N}.
\end{split}
\end{align}
For notational convenience, introduce $ e_{l|k} = x_{l|k} - z_{l|k} $ for $ l\in \mathbb{N}_{0}^{N} $. The parameterization \eqref{eq:TubeParameterization} allows the state tube to grow in size along the prediction horizon in order to capture all the reachable states of the system for any realization of the model uncertainty and disturbance. Although fixing the ellipsoid shape using $ P $ could result in faster growth of the state tube size, it simplifies the online optimization problem.

The control inputs are parameterized as
\begin{align}\label{eq:InputParameterization}
u_{l|k} =\left\{ \begin{array}{ll}
K e_{l|k} + v_{l|k} , &  l \in \mathbb{N}_{0}^{N-1} \\
K x_{l|k}  , &  l >N \\
\end{array}  \right.
\end{align}
where $K \in \mathbb{R}^{n_{u}\times n_{x}}$ is a feedback gain designed offline and $  \{v_{l|k}\}_{l=0}^{N-1}  $ are online optimization variables. Such a parameterization of the control inputs is standard in tube MPC methods \cite{kouvaritakis2015model}. This is because the feedback gain $ K $ compensates for the effect of disturbances $ w_k $ and model uncertainty $ \Delta_k $, and the affine terms $ \{v_{l|k}\}_{l=0}^{N-1} $ increase the flexibility to ensure constraint satisfaction.

Finally, the terminal set $ \mathcal{X}_T $ is chosen to be an ellipsoid described by 
\begin{align}\label{eq:TermSetPara}
 \mathcal{X}_T:= \{x|x\tr P x \le 1\} = \bar{\mathcal{X}}.
\end{align}
Note that the terminal ellipsoid is also defined by the same shape matrix $ P $ used to parameterize the state tube. This choice simplifies the design of $ P $ to ensure that $ \mathcal{X}_{T} $ is invariant, as discussed in Section \ref{Sec:OfflineDesign}. 

\subsection{Constraint reformulations}\label{Sec:Constr_Refo}
Using the parameterizations \eqref{eq:InputParameterization}-\eqref{eq:TermSetPara}, the robust MPC optimization problem \eqref{eq:RMPC} must be reformulated in terms of the variables $ z_{l|k}, v_{l|k}, \alpha_{l|k} $. First, the initial condition at each time step \eqref{eq:RMPC2} can be written as
\begin{align}\label{eq:InitCond}
	 \norm{ L (\hat{x}_{k} - z_{0|k}) }_2 \le \alpha_{0|k},
\end{align}
which is a second order conic constraint \cite{lobo1998applications}. The tube inclusion constraints, represented by \eqref{eq:RMPC3}, \eqref{eq:RMPC4} and \eqref{eq:RMPC5} are reformulated as a linear matrix inequality in the following proposition.
\begin{proposition}\label{Prop:Inclusion}
Under the parameterization \eqref{eq:TubeParameterization}, the tube inclusion described by \eqref{eq:RMPC3}, \eqref{eq:RMPC4} and \eqref{eq:RMPC5} will be satisfied if  $ \exists \tau_{1,l|k}, \{\tau_{2,l|k,i}\}_{i=1}^{\delta}, \tau_{3,l|k} \in \mathbb{R}_{\ge 0} $ such that for $ T_{2,l|k} = \text{diag}\{ \tau_{2,l|k,1}I_{n_{\Delta_1}},\ldots,\tau_{2,l|k,\delta}I_{n_{\Delta_\delta}}  \} $ and $\resp{d_{l|k}=A z_{l|k} + B v_{l|k} - z_{l+1|k}}$, 
\begin{align}\label{eq:InclusionRefo}
	\begin{bmatrix}
	- \tau_{1,l|k} P & 0 & 0 & 0 & \alpha_{l|k} (A+BK)\tr & \alpha_{l|k} (C_q+D_u K)\tr \\
	 \star & -T_{2,l|k} P_{\Delta} & 0 & 0 & T_{2,l|k} B_{p}\tr  & 0 \\
	 \star & \star & -\tau_{3,l|k} P_w & 0 & B_w\tr & D_w\tr \\
	 \star & \star & \star & \tau_{1,l|k} + \tau_{3,l|k} - \alpha_{l+1|k} & \resp{d_{l|k}\tr} & (C_q z_{l|k} + D_u v_{l|k})\tr \\
	  \star &\star &\star &\star & -\alpha_{l+1|k} P^{-1} & 0 \\
	  \star &\star &\star &\star &\star & -T_{2,l|k}
	\end{bmatrix} \preccurlyeq 0,
\end{align} 
\end{proposition}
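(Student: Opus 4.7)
The approach is to recognize the LMI \eqref{eq:InclusionRefo} as a sufficient condition for the tube inclusion \eqref{eq:RMPC5} obtained by combining the S-procedure with the Schur complement. First, under the parameterizations \eqref{eq:TubeParameterization}-\eqref{eq:InputParameterization}, I would reduce \eqref{eq:RMPC5} to a conditional quadratic implication: with $e_{l|k} := x_{l|k} - z_{l|k}$, $r_{l|k} := x_{l+1|k} - z_{l+1|k} = (A+BK) e_{l|k} + B_p p_{l|k} + B_w w_{l|k} + d_{l|k}$, and $q_{l|k} = (C_q + D_u K) e_{l|k} + D_w w_{l|k} + (C_q z_{l|k} + D_u v_{l|k})$, the inclusion amounts to $r_{l|k}\tr P r_{l|k} \le \alpha_{l+1|k}^{2}$ whenever $e_{l|k}\tr P e_{l|k} \le \alpha_{l|k}^{2}$, $w_{l|k}\tr P_w w_{l|k} \le 1$ and the $\delta$ inequalities \eqref{eq:pq_Delta} all hold.

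Next, I would apply the S-procedure (Lemma \ref{Lem:S_Proc}) with non-negative multipliers $\tau_{1,l|k}$, $\{\tau_{2,l|k,i}\}_{i=1}^{\delta}$ and $\tau_{3,l|k}$. Summing the $\delta$ uncertainty constraints weighted by the $\tau_{2,l|k,i}$ collapses them to $p_{l|k}\tr T_{2,l|k} P_\Delta p_{l|k} \le q_{l|k}\tr T_{2,l|k} q_{l|k}$, using that $T_{2,l|k}$ is block-diagonal and compatible with $\Pi_j\tr \Pi_j$ and with $P_\Delta$. The resulting unconstrained Lagrangian is a quadratic form in $\xi := [e_{l|k}\tr, p_{l|k}\tr, w_{l|k}\tr, 1]\tr$, and by Lemma \ref{Lem:Quadr} its non-positivity for every $\xi$ is equivalent to the negative semidefiniteness of a $4 \times 4$ block matrix $N(\tau_{1,l|k}, T_{2,l|k}, \tau_{3,l|k})$.

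Finally, $N$ contains two positive-semidefinite rank-deficient contributions, $M_r\tr P M_r$ arising from $r_{l|k}\tr P r_{l|k}$ and $M_q\tr T_{2,l|k} M_q$ arising from $q_{l|k}\tr T_{2,l|k} q_{l|k}$, which I would linearize by applying Lemma \ref{Lem:Schur} in turn, introducing the two auxiliary block rows and columns of \eqref{eq:InclusionRefo}. Selecting the pivots $-\alpha_{l+1|k} P^{-1}$ and $-T_{2,l|k}$ yields the (5,5) and (6,6) entries directly; the former pivot is equivalent to rewriting $r_{l|k}\tr P r_{l|k} \le \alpha_{l+1|k}^{2}$ as $r_{l|k}\tr(P/\alpha_{l+1|k}) r_{l|k} \le \alpha_{l+1|k}$, which also yields the (4,4) scalar entry $\tau_{1,l|k} + \tau_{3,l|k} - \alpha_{l+1|k}$ once the ellipsoid constraint is normalized via $\tilde e := e_{l|k}/\alpha_{l|k}$ (thereby introducing the $\alpha_{l|k}$ factors in the (1,5) and (1,6) coupling blocks). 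The main obstacle is the algebraic bookkeeping of these two Schur steps together with a compatible congruence on blocks 2 and 6 that produces $T_{2,l|k} B_p\tr$ in the (2,5) coupling and $-T_{2,l|k}$ in the (6,6) pivot (in place of the $-T_{2,l|k}^{-1}$ that a direct Schur formula would give); this congruence is well-posed precisely because of the block-diagonal commutativity of $T_{2,l|k}$ with $P_\Delta$ and $\Pi_j$.
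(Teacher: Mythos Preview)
Your approach is essentially the paper's: rewrite the inclusion as a quadratic implication in $(e_{l|k},p_{l|k},w_{l|k})$, apply the S-procedure, turn the resulting inequality into a $4\times4$ block matrix via Lemma~\ref{Lem:Quadr}, linearize the two outer products by Schur complement, and finish with a diagonal congruence.

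There is one bookkeeping slip in the last step. If you take the S-procedure weights on the uncertainty constraints to be $\tau_{2,l|k,i}$ themselves (so that the aggregated constraint reads $p^\top T_{2,l|k}P_\Delta p\le q^\top T_{2,l|k}q$), the Schur pivot on block~6 is indeed $-T_{2,l|k}^{-1}$, but no congruence restricted to blocks~2 and~6 then recovers \eqref{eq:InclusionRefo}: right-multiplying block~6 by $T_{2,l|k}$ fixes $(6,6)$ but also multiplies the $(1,6)$, $(3,6)$ and $(4,6)$ couplings by $T_{2,l|k}$, which the target LMI does not have. The paper instead takes $\tau_{2,l|k,j}^{-1}$ as the S-procedure multipliers, so that the Schur step already delivers $-T_{2,l|k}$ in $(6,6)$ and leaves $-T_{2,l|k}^{-1}P_\Delta$ in $(2,2)$; the final congruence is $\mathrm{diag}\{\alpha_{l|k}I_{n_x},\,T_{2,l|k},\,I_{n_w},\,1,\,I_{n_x},\,I_{n_\Delta}\}$, acting on blocks~1 and~2 (not~6). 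This simultaneously produces $-T_{2,l|k}P_\Delta$ in $(2,2)$, $T_{2,l|k}B_p^\top$ in $(2,5)$, and the $\alpha_{l|k}$ factors in row~1, using only that $T_{2,l|k}$ commutes with $P_\Delta$.
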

\begin{proof}
The dynamics of the system \eqref{eq:Dynamics} and tube inclusion constraints \eqref{eq:RMPC3}, \eqref{eq:RMPC4} and \eqref{eq:RMPC5} imply that
\begin{align}\label{eq:Prop_Inclusion}
x_{l+1|k} \in \mathcal{X}_{l+1|k}, \quad \forall x_{l|k} \in \mathcal{X}_{l|k},  \Delta_{l|k}\in \mathcal{D}, w_{l|k} \in \mathcal{W}.
\end{align}

Using the parameterization \eqref{eq:TubeParameterization}, \eqref{eq:Prop_Inclusion} can be written as 
\begin{align}\label{eq:Prop_Inclusion2}
e_{l+1|k}\tr P e_{l+1|k} \le \alpha_{l+1|k}^2, \quad \forall x_{l|k} \in \mathcal{X}_{l|k},  \Delta_{l|k}\in \mathcal{D}, w_{l|k} \in \mathcal{W},
\end{align}
where
\begin{align}\label{eq:Prop_Inclusion3}
\begin{split}
e_{l+1|k} &= Ax_{l|k} + Bu_{l|k} + B_{p} p_{l|k} + B_w w_{l|k} - z_{l+1|k}\\
			&= Ae_{l|k} + BKe_{l|k} + B_{p} p_{l|k} + B_w w_{l|k} + A z_{l|k} + B v_{l|k} - z_{l+1|k}. \\
			&= Ae_{l|k} + BKe_{l|k} + B_{p} p_{l|k} + B_w w_{l|k} + \resp{d_{l|k}}
\end{split},
\end{align}
\resp{where $ d_{l|k}=A z_{l|k} + B v_{l|k} - z_{l+1|k} $ is a term dependent on the online optimization variables.}
The condition $ x_{l|k} \in \mathcal{X}_{l|k} $ can be written as  $ e_{l|k}\tr P e_{l|k} \le \alpha_{l|k}^2 $, and $ w_{l|k} \in \mathcal{W} $ can be written as $ w_{l|k}\tr P_w w_{l|k} \le 1 $. Moreover, the condition $ \forall \Delta_{l|k}\in \mathcal{D} $ can be replaced by its equivalent form in \eqref{eq:pq_Delta}, which can then be written as, $ \forall j \in \mathbb{N}_{1}^{\delta}  $,
\begin{align}\label{eq:Condition2}
\begin{split}
	p_{k}\tr \Pi_{j}\tr P_{\Delta} \Pi_{j} p_{k} &\le q_k\tr \Pi_{j}\tr \Pi_{j} q_k, \\
 	\iff p_{k}\tr \Pi_{j}\tr P_{\Delta} \Pi_{j} p_{k} &\preccurlyeq \left( (C_q+D_u K)e_{l|k} + D_w w_{l|k} \right)\tr \Pi_{j}\tr \Pi_{j} \left( (C_q+D_u K)e_{l|k} + D_w w_{l|k} \right) \\
 	& + 2 \left( Cz_{l|k} {+} D_u v_{l|k} \right)\tr \Pi_{j}\tr \Pi_{j} \left( (C_q{+}D_u K)e_{l|k} + D_w w_{l|k} \right) + \left( Cz_{l|k} {+} D_u v_{l|k} \right)\tr \Pi_{j}\tr \Pi_{j} \left( Cz_{l|k} {+} D_u v_{l|k} \right).
\end{split}
\end{align}

For $ l\in \mathbb{N}_{0}^{N-1} $ and $ j\in\mathbb{N}_{1}^{n_{\delta}} $, consider the quadratic forms $ m_{x^+,l},m_{x,l},\{m_{\Delta,l,j}\}_{j=1}^{\delta},m_{w,l} $ in the variable $ 
\begin{bmatrix}
e_{l|k} \tr & p_{l|k} \tr & w_{l|k}\tr
\end{bmatrix}\tr $, where the dependence of the quadratic functionals on the variables has been omitted. The quadratic forms are defined as follows, where $ m_{x^+,l} $ is based on \eqref{eq:Prop_Inclusion2}- \eqref{eq:Prop_Inclusion3}
\begin{align}\label{eq:Polynomial1}
\begin{split}
m_{x^{+},l} &= \begin{bmatrix}
e_{l|k} \tr & p_{l|k} \tr & w_{l|k}\tr
\end{bmatrix}
\begin{bmatrix}
(A + BK)\tr \\ B_p\tr \\ B_w\tr 
\end{bmatrix}
P
\begin{bmatrix}
(A+BK) & B_p & B_w
\end{bmatrix}
\begin{bmatrix}
e_{l|k} \\ p_{l|k} \\ w_{l|k}
\end{bmatrix} +
\resp{
2d_{l|k}\tr P 
\begin{bmatrix}
(A+BK) & B_p & B_w
\end{bmatrix}
\begin{bmatrix}
e_{l|k} \\ p_{l|k} \\ w_{l|k}
\end{bmatrix}} \\
& \quad \resp{ + d_{l|k}\tr P d_{l|k}}
- \alpha_{l+1|k}^2,
\end{split}
\end{align}
 $ m_{x,l} $ is based on $ e_{l|k}\tr P e_{l|k} \le \alpha_{l|k}^2 $,
\begin{align}\label{eq:Polynomial2}
m_{x,l} &= \begin{bmatrix}
e_{l|k} \tr & p_{l|k} \tr & w_{l|k}\tr
\end{bmatrix}
\begin{bmatrix}
I_{n_x} \\ 0 \\ 0
\end{bmatrix}
P
\begin{bmatrix}
I_{n_x} & 0 & 0
\end{bmatrix}
\begin{bmatrix}
e_{l|k} \\ p_{l|k} \\ w_{l|k}
\end{bmatrix} - \alpha_{l|k}^2, 
\end{align}
 $ m_{\Delta,l,j} $ is based on \eqref{eq:Condition2},
\begin{align}\label{eq:Polynomial3}
m_{\Delta,l,j} &= \begin{bmatrix}
e_{l|k} \tr & p_{l|k} \tr & w_{l|k}\tr
\end{bmatrix}
\left(
\begin{bmatrix}
0 \\ I_{n_\Delta} \\ 0
\end{bmatrix}
\Pi_{j}\tr P_{\Delta} \Pi_{j}
\begin{bmatrix}
0 & I_{n_{\Delta}} & 0
\end{bmatrix}
- 
\begin{bmatrix}
(C_q + D_u K)\tr \\ 0 \\ D_w\tr
\end{bmatrix} \Pi_{j}\tr \Pi_{j}
\begin{bmatrix}
(C_q + D_u K) & 0 & D_w
\end{bmatrix}
\right)
\begin{bmatrix}
e_{l|k} \\ p_{l|k} \\ w_{l|k}
\end{bmatrix}  \\
&\quad - 2\left( C_q z_{l|k} {+} D_u v_{l|k} \right)\tr \Pi_{j}\tr \Pi_{j} \left( (C_q{+}D_u K)e_{l|k} {+} D_w w_{l|k} \right) - \left( C_q z_{l|k} {+} D_u v_{l|k} \right)\tr \Pi_{j}\tr \Pi_{j} \left( C_q z_{l|k} {+} D_u v_{l|k} \right), \: j\in \mathbb{N}_{1}^{\delta}, \nonumber
\end{align}
and $ m_{w,l} $ is based on $ w_{l|k}\tr P_w w_{l|k} \le 1 $,
\begin{align}\label{eq:Polynomial4}
m_{w,l} &= \begin{bmatrix}
e_{l|k} \tr & p_{l|k} \tr & w_{l|k}\tr
\end{bmatrix}
\begin{bmatrix}
0 \\ 0 \\ I_{n_x}
\end{bmatrix}
P_{w}
\begin{bmatrix}
0 & 0 & I_{n_x}
\end{bmatrix}
\begin{bmatrix}
e_{l|k} \\ p_{l|k} \\ w_{l|k}
\end{bmatrix} - 1.
\end{align}
Then, the tube inclusion constraint \eqref{eq:Prop_Inclusion2} can be written as 
\begin{align}\label{eq:CompactInclusion}
\begin{split}
 m_{x^+,l} &\le 0, \quad \forall  \begin{bmatrix}
 e_{l|k} \tr & p_{l|k} \tr & w_{l|k}\tr
 \end{bmatrix}\tr \in \left\{ \begin{bmatrix}
 e_{l|k} \tr & p_{l|k} \tr & w_{l|k}\tr
 \end{bmatrix}\tr \: \left|  \: \begin{array}{l}
m_{x,l} \le 0, \:
\{m_{\Delta,l,j} \le 0\}_{j=1}^{\delta} ,\:
 m_{w,l} \le 0 
 \end{array} \right. \right\}
\end{split}
\end{align} 
 Applying S-procedure from Lemma \ref{Lem:S_Proc}, the tube inclusion \eqref{eq:Prop_Inclusion2} will be satisfied if there exist positive scalars $ \tau_{1,l|k}, \{\tau_{2,l|k,j}^{-1}\}_{j=1}^{\delta}, \tau_{3,l|k} $ such that 
\begin{align}\label{eq:S_proc_Inclusion}
\begin{split}
\alpha_{l+1|k}^{-1} m_{x^+,l} - \tau_{1,l|k} \alpha_{l|k}^{-2} m_{x,l} - \sum_{j=1}^{\delta}\tau_{2,l|k,j}^{-1} m_{\Delta,l,j} -  \tau_{3,l|k}  m_{w,l} \le 0,\\
\iff \alpha_{l+1|k}^{-1} m_{x^+,l} - \tau_{1,l|k} \alpha_{l|k}^{-2} m_{x,l} -m_{\Delta,l} -  \tau_{3,l|k}  m_{w,l} \le 0,
\end{split}
\end{align}
where $ m_{\Delta,l} $ is defined as
\begin{align}\label{eq:QuadFormInclusion}
\begin{split}
m_{\Delta,l} = \sum_{j=1}^{\delta}  \tau_{2,l|k,j}^{-1} m_{\Delta,l,j} &= \begin{bmatrix}
e_{l|k} \tr & p_{l|k} \tr & w_{l|k}\tr
\end{bmatrix}
\left(
\begin{bmatrix}
0 \\ I_{n_\Delta} \\ 0
\end{bmatrix}
 T_{2,l|k}^{-1} P_{\Delta} 
\begin{bmatrix}
0 & I_{n_{\Delta}} & 0
\end{bmatrix}
- 
\begin{bmatrix}
(C_q + D_u K)\tr \\ 0 \\ D_w\tr
\end{bmatrix} T_{2,l|k}^{-1}
\begin{bmatrix}
(C_q + D_u K) & 0 & D_w
\end{bmatrix}
\right)
\begin{bmatrix}
e_{l|k} \\ p_{l|k} \\ w_{l|k}
\end{bmatrix}  \\
&- 2\left( C_q z_{l|k} {+} D_u v_{l|k} \right)\tr T_{2,l|k}^{-1} \left( (C_q{+}D_u K)e_{l|k} {+} D_w w_{l|k} \right) - \left( C_q z_{l|k} {+} D_u v_{l|k} \right)\tr T_{2,l|k}^{-1} \left( C_q z_{l|k} {+} D_u v_{l|k} \right).
\end{split}
\end{align}
Using Lemma \ref{Lem:Quadr}, \eqref{eq:S_proc_Inclusion} is equivalent to the matrix inequality
\begin{align}\label{eq:Prop_Inclusion4}
\left[ \setlength\arraycolsep{8pt}
\begin{array} {cccc}
 N_1 &  \alpha_{l+1|k}^{-1} (A+BK)\tr P B_p & \alpha_{l+1|k}^{-1}  (A+BK)\tr P B_w   & (C_q + D_u K)\tr T_{2,l|k}^{-1} (C_q z_{l|k} + D_u v_{l|k}) \\\vspace{1em}
 & & + (C_q + D_u K)\tr T_{2,l|k}^{-1} D_w  & \resp{+\alpha_{l+1|k}^{-1}(A+BK)\tr P d_{l|k}} \\\vspace{1em}
 \star & \alpha_{l+1|k}^{-1} B_p\tr P B_p  - T_{2,l|k}^{-1} P_{\Delta}& \alpha_{l+1|k}^{-1} B_p\tr P B_w & \resp{\alpha_{l+1|k}^{-1}B_p\tr P d_{l|k}}\\ 
 \star & \star & \alpha_{l+1|k}^{-1} B_w\tr P B_w - \tau_{3,l|k} P_w   &  D_w\tr T_{2,l|k}^{-1} (C_q z_{l|k} + D_u v_{l|k})  \\ \vspace{0.5em}
 & & +  D_w\tr T_{2,l|k}^{-1} D_w & \resp{+\alpha_{l+1|k}^{-1}B_w\tr P d_{l|k}} \\
 \star & \star & \star & N_2
\end{array} \right]
\preccurlyeq 0,
\end{align}
where $ N_1 = \alpha_{l+1|k}^{-1} (A+BK)\tr P (A+BK) -\tau_{1,l|k} \alpha_{l|k}^{-2} P +  (C_q+D_u K)\tr T_{2,l|k}^{-1}(C_q +D_u K) $ and $ N_2 =  -\alpha_{l+1|k} + \tau_{1,l|k} +  \left( C_q z_{l|k} + D_u v_{l|k} \right)\tr T_{2,l|k}^{-1} \left( C_q z_{l|k} + D_u v_{l|k} \right) + \tau_{3,l|k} \resp{+\alpha_{l+1|k}^{-1}d_{l|k}\tr P d_{l|k}} $. The inequality \eqref{eq:Prop_Inclusion4} can then be equivalently written as
\begin{align}\label{eq:Prop_Inclusion5}
\begin{split}
\begin{bmatrix}
(A+BK)\tr & (C_q+D_u K)\tr \\
B_p\tr & 0 \\
B_w\tr & D_w\tr \\
\resp{d_{l|k}\tr} & (C_qz_{l|k}+D_u v_{l|k})\tr
\end{bmatrix}
\begin{bmatrix}
\alpha_{l+1|k}^{-1}P & 0 \\
0 & T_{2,l|k}^{-1} 
\end{bmatrix}
\begin{bmatrix}
A+BK & B_p & B_w & \resp{d_{l|k}} \\
C_q+D_u K & 0 & D_w & C_q z_{l|k} + D_u v_{l|k}
\end{bmatrix} &+ \\
 \begin{bmatrix}
-\tau_{1,l|k}\alpha_{l|k}^{-2}P & 0 & 0 & 0 \\
0 & -T_{2,l|k}^{-1}P_{\Delta} & 0 & 0 \\
0 & 0 & -\tau_{3,l|k} P_{w} & 0 \\
0 & 0 & 0 & \tau_{1,l|k} + \tau_{3,l|k} - \alpha_{l+1|k}
\end{bmatrix} &\preccurlyeq 0.
\end{split}
\end{align}
Using Schur's complement from Lemma \ref{Lem:Schur}, the matrix inequality \eqref{eq:Prop_Inclusion5} holds if and only if the following linear matrix inequality (LMI) is satisfied
\begin{align}\label{eq:Prop_Inclusion6}
\begin{bmatrix}
-\tau_{1,l|k}\alpha_{l|k}^{-2}P & 0 & 0 & 0 & (A+BK)\tr & (C_q + D_u K)\tr \\
\star & -T_{2,l|k}^{-1}P_{\Delta} & 0 & 0 & 0 & B_p\tr & 0 \\
\star &\star & -\tau_{3,l|k} P_{w} & 0 & B_w\tr & D_w\tr \\
\star &\star &\star & \tau_{1,l|k} + \tau_{3,l|k} - \alpha_{l+1|k} & \resp{d_{l|k}\tr} & (C_qz_{l|k}+D_u v_{l|k})\tr \\
\star &\star &\star &\star & -\alpha_{l+1|k} P^{-1} & 0 \\
\star &\star &\star &\star &\star & - T_{2,l|k} 
\end{bmatrix} &\preccurlyeq 0 .
\end{align}
The original inequality \eqref{eq:InclusionRefo} is obtained by pre- and post-multiplying \eqref{eq:Prop_Inclusion6} by the matrix $ \text{diag}\{ \alpha_{l|k} I_{n_x}, T_{2,l|k} I_{n_\Delta},I_{n_w},1,I_{n_x},I_{n_\Delta} \} $. Note that \eqref{eq:InclusionRefo} is a LMI in the variables $ z_{l|k}, \alpha_{l|k}, \alpha_{l+1|k}, \tau_{1,l|k},T_{2,l|k},\tau_{3,l|k} $.
\end{proof}

\begin{lemma}\label{Lem:StateCons}
Under the parameterization \eqref{eq:TubeParameterization}, the state and input constraints \eqref{eq:RMPC6} will be satisfied if 
\begin{align}\label{eq:StateInputCons}
Fz_{l|k} + G v_{l|k} + \alpha_{l|k} \bar{f} &\le \mathbf{1}, \quad \forall l\in \mathbb{N}_{0}^{N-1},
\end{align}
where $  \left[\bar{f}\right]_{i} := \max_{x\in\bar{\mathcal{X}}} \left[F+GK\right]_{i}x,  \forall i\in \mathbb{N}_{1}^{n_c} $ are constants computed offline.
\end{lemma}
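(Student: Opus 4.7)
The plan is a direct worst-case analysis applied row-by-row to the polytopic constraint $Fx_{l|k} + Gu_{l|k} \le \mathbf{1}$, exploiting the homothetic parameterization of the tube.

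First, I would substitute the parameterizations \eqref{eq:TubeParameterization} and \eqref{eq:InputParameterization} into \eqref{eq:RMPC6}. Writing $x_{l|k} = z_{l|k} + e_{l|k}$ with $e_{l|k} \in \alpha_{l|k}\bar{\mathcal{X}}$ and $u_{l|k} = K e_{l|k} + v_{l|k}$, the left-hand side decomposes as
\begin{align*}
F x_{l|k} + G u_{l|k} = (F z_{l|k} + G v_{l|k}) + (F+GK) e_{l|k},
\end{align*}
where the first summand depends only on the online decision variables while the second captures the perturbation due to the error within the ellipsoid.

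Next, since \eqref{eq:RMPC6} must hold for every $x_{l|k} \in \mathcal{X}_{l|k}$, it is equivalent to imposing, for each row $i\in \mathbb{N}_{1}^{n_c}$, the tightened condition
\begin{align*}
[F]_i z_{l|k} + [G]_i v_{l|k} + \max_{e \in \alpha_{l|k}\bar{\mathcal{X}}} [F+GK]_i e \;\le\; 1.
\end{align*}
Using the positive homogeneity of the support function together with the fact that $\bar{\mathcal{X}}$ is a fixed ellipsoid, the inner maximum equals $\alpha_{l|k} \max_{x \in \bar{\mathcal{X}}} [F+GK]_i x = \alpha_{l|k} [\bar{f}]_i$. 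Stacking the $n_c$ resulting inequalities yields exactly \eqref{eq:StateInputCons}.

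There is no real obstacle; the only thing to verify is that $[\bar{f}]_i$ is a constant computable offline. Since $\bar{\mathcal{X}} = \{x \mid \norm{Lx}_2 \le 1\}$ is fixed (independent of $l$ and $k$) and $K$ is designed offline, the support function of $\bar{\mathcal{X}}$ admits the closed form $[\bar{f}]_i = \norm{L^{-\top} [F+GK]_i\tr}_2$, which can be precomputed from $P$ and $K$. Consequently \eqref{eq:StateInputCons} is a linear inequality in the online variables $z_{l|k}, v_{l|k}, \alpha_{l|k}$, preserving the tractability claimed for the overall scheme.
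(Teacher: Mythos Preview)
Your proposal is correct and follows essentially the same approach as the paper: substitute the parameterizations, separate the nominal term $Fz_{l|k}+Gv_{l|k}$ from the error term $(F+GK)e_{l|k}$, and tighten row-wise via the support function of $\bar{\mathcal{X}}$ scaled by $\alpha_{l|k}$. Your exposition is in fact slightly more detailed than the paper's (which just writes the chain of equivalences and a final $\impliedby$), and your closed-form expression $[\bar{f}]_i=\norm{L^{-\top}[F+GK]_i\tr}_2$ is a useful addition that the paper omits.
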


\begin{proposition}\label{Prop:TermCons}
Under the parameterization \eqref{eq:TubeParameterization} and \eqref{eq:TermSetPara}, the terminal constraint \eqref{eq:RMPC7} is satisfied iff there exists a positive scalar $ \tau_{1,T} $ such that
\begin{align}\label{eq:TermConsRefo}
\begin{bmatrix}
-\tau_{1,T}P & 0 & 0 & \tau_{1,T}I_{n_x} \\
\star & -1 & \alpha_{N|k} & z_{N|k}\tr \\
\star & \star & -\tau_{1,T} & 0 \\
\star & \star & \star & -P^{-1}
\end{bmatrix} \preccurlyeq 0.
\end{align}
\end{proposition}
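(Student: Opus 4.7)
My plan is to prove the biconditional by reducing both \eqref{eq:TermConsRefo} and the set inclusion $\mathcal{X}_{N|k} \subseteq \mathcal{X}_T$ to a common scalar condition on $\tau_{1,T}$. First I would rewrite the inclusion as the quadratic implication $(x - z_{N|k})^\top P (x - z_{N|k}) - \alpha_{N|k}^2 \le 0 \;\Rightarrow\; x^\top P x - 1 \le 0$ for all $x \in \mathbb{R}^{n_x}$. Slater's condition holds at $x = z_{N|k}$, so Lemma \ref{Lem:S_Proc} makes this implication necessarily and sufficiently equivalent to the existence of a multiplier $\mu > 0$ (the case $\mu = 0$ is ruled out, since it would force $x^\top P x \le 1$ for every $x$) with $(x^\top P x - 1) - \mu \bigl[(x-z_{N|k})^\top P(x-z_{N|k}) - \alpha_{N|k}^2\bigr] \le 0$ for all $x$; collecting terms and invoking Lemma \ref{Lem:Quadr} turns this into a $2 \times 2$ LMI parameterised by $\mu$.

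Next I would attack \eqref{eq:TermConsRefo} itself, applying Schur complement (Lemma \ref{Lem:Schur}) three times to progressively reduce the $4 \times 4$ LMI to a scalar inequality. Starting from the strictly negative block $-\tau_{1,T} P$ at position $(1,1)$, Schur elimination through the coupling block $\tau_{1,T} I$ updates the $(4,4)$ block from $-P^{-1}$ to $(\tau_{1,T} - 1) P^{-1}$, which forces $\tau_{1,T} \le 1$. Eliminating this updated block (using strict $\tau_{1,T} < 1$, so that $(\tau_{1,T}-1)P^{-1} \prec 0$) injects a term $z_{N|k}^\top P z_{N|k}/(1-\tau_{1,T})$ into the scalar diagonal, and eliminating the remaining scalar block $-\tau_{1,T}$ adds $\alpha_{N|k}^2/\tau_{1,T}$, leaving the equivalent scalar condition
\[
\frac{z_{N|k}^\top P z_{N|k}}{1 - \tau_{1,T}} + \frac{\alpha_{N|k}^2}{\tau_{1,T}} \le 1.
\]

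Finally I would close the loop by showing that both characterisations match. Minimising the left-hand side of the scalar condition over $\tau_{1,T} \in (0,1)$ via first-order conditions gives the minimiser $\tau_{1,T}^{\star} = \alpha_{N|k} / (\alpha_{N|k} + \sqrt{z_{N|k}^\top P z_{N|k}})$ with optimal value $(\sqrt{z_{N|k}^\top P z_{N|k}} + \alpha_{N|k})^2$. Hence the scalar condition is feasible in $\tau_{1,T}$ if and only if $\sqrt{z_{N|k}^\top P z_{N|k}} + \alpha_{N|k} \le 1$; and via the change of coordinates $y = L(x - z_{N|k})$, with $P = L^\top L$, which turns $\mathcal{X}_{N|k}$ into the ball of radius $\alpha_{N|k}$ at the origin and $\mathcal{X}_T$ into the unit ball centred at $-L z_{N|k}$, this is precisely the geometric ellipsoid-containment condition and thus coincides with the feasibility of the S-procedure LMI from the first step.

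The main obstacle will be the careful bookkeeping of the domain restrictions $\tau_{1,T} \in (0,1)$ emerging from the three Schur reductions, together with handling the degenerate boundary cases $\tau_{1,T} \in \{0, 1\}$ and $z_{N|k} = 0$ (where intermediate blocks become singular so the Schur manipulations must be replaced by direct inspection of \eqref{eq:TermConsRefo}), as well as checking that the feasibility in $\tau_{1,T}$ — rather than a direct matrix-level congruence between the $2\times 2$ S-procedure LMI and the $4\times 4$ form \eqref{eq:TermConsRefo} — is what the proposition actually asserts.
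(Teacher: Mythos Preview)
Your argument is correct, but it takes a longer route than the paper's. The paper proceeds by a direct algebraic chain: it writes the inclusion as the same quadratic implication you do, applies the S-procedure with multiplier $\tau_{1,T}^{-1}$ to obtain a $2\times 2$ matrix inequality via Lemma~\ref{Lem:Quadr}, and then \emph{expands} that $2\times 2$ inequality to the $4\times 4$ form \eqref{eq:TermConsRefo} by two Schur complements (pivoting on the definite blocks $-P^{-1}$ and $-\tau_{1,T}$) followed by a congruence with $\text{diag}\{\tau_{1,T}I_{n_x},1,1,I_{n_x}\}$. In particular, the S-procedure multiplier and the LMI parameter are identified ($\mu = \tau_{1,T}^{-1}$), so the ``exists $\tau_{1,T}$'' on both sides match automatically and no optimisation over $\tau_{1,T}$ is needed.

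By contrast, you reduce the two sides \emph{separately} to the explicit geometric criterion $\sqrt{z_{N|k}^\top P z_{N|k}} + \alpha_{N|k} \le 1$: the $4\times 4$ LMI via three Schur contractions and a minimisation, the inclusion via the ball-in-ball picture. This buys you the explicit containment condition as a byproduct, which is nice, but it costs you the minimisation step and the boundary-case bookkeeping ($\tau_{1,T}=1$, $z_{N|k}=0$) that you correctly flag. Note also that your first paragraph sets up the $2\times 2$ S-procedure LMI but then never uses it in the logical chain; the paper's route shows that this $2\times 2$ inequality is exactly the missing link that lets you avoid the detour through the scalar criterion.
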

\resp{ The proofs of Lemma \ref{Lem:StateCons} and Proposition \ref{Prop:TermCons} are given in Appendix \ref{Ap:Proofs}.
Thus, using the above results, the state and input constraints can be reformulated into the linear inequalities \eqref{eq:StateInputCons}, and the terminal constraint \eqref{eq:TermConsRefo} is an LMI in the variables $ z_{N|k},\alpha_{N|k}, \tau_{1,T} $. 
}
\subsection{Cost function}
The cost function to be used in the MPC optimization problem will be defined as a worst-case cost. Such a cost function ensures that the performance over all realizations of uncertainty and disturbance is taken into account.  The stage cost $ J(\mathcal{X}_{l|k},u_{l|k}) $ and the terminal cost $ J_{T}(\mathcal{X}_{N|k}) $ are thus
\begin{align}\label{eq:Costs}
\begin{split}
J(\mathcal{X}_{l|k},u_{l|k}) = \max_{x\in\mathcal{X}_{l|k}} x\tr Q_{x} x + u\tr Q_{u} u, \quad J_T(\mathcal{X}_{N|k}) = \max_{x\in \mathcal{X}_{N|k}}  x\tr  \resp{P_C} x,
\end{split}
\end{align}
\resp{where $P_C \in \mathbb{R}^{n_x\times n_x}$ is a positive definite terminal cost matrix chosen offline. A possible strategy to select $P_C$ will be illustrated in Proposition \ref{Prop:P_Cdesign}.} The costs defined in \eqref{eq:Costs} need to be reformulated in terms of the online optimization variables of MPC. For this, we use the following proposition. The proof is similar to that of Proposition \ref{Prop:Inclusion}, and is omitted.
\begin{proposition}\label{Prop:CostReform}
	If there exist
	\begin{enumerate}[label=(\roman*)]
		\item scalars $ \{\gamma_{l|k}, \tau_{4,l|k} \}_{l=0}^{N-1} \in \mathbb{R}_{\ge 0} $ such that 
		\begin{subequations}
		\begin{align}
		&\begin{bmatrix}
		-\tau_{4,l|k} P & 0 & \tau_{4,l|k} I_{n_x} & \tau_{4,l|k} K\tr & 0 \\
		\star & -\gamma_{l|k} & z_{l|k}\tr & v_{l|k}\tr & \alpha_{l|k} \\
		\star & \star & -Q_{x}^{-1} & 0 & 0 \\
		\star & \star & \star & -Q_{u}^{-1} & 0 \\
		\star & \star & \star & \star & -\tau_{4,l|k}
		\end{bmatrix} \preccurlyeq 0,     \label{eq:StageCostReform1} \\
	\text{then,} \qquad 	\displaystyle &\max_{x\in\mathcal{X}_{l|k}} x\tr Q_{x} x + u\tr Q_{u} u,  \le \gamma_{l|k},  \qquad \forall l\in \mathbb{N}_{0}^{N-1}. \label{eq:StageCostReform2}
		\end{align} 
		\end{subequations} 
		
		\item scalars $ \gamma_{T}, \tau_{2,T} \in \mathbb{R}_{\ge 0} $ such that 
		\begin{subequations}
		\begin{align}
		&\begin{bmatrix}
		-\tau_{2,T} P & 0 & -\tau_{2,T} I_{n_x} &  0 \\
		\star & -\gamma_{T} & z_{N|k}\tr  & \alpha_{N|k} \\
		\star & \star & -\resp{P_C}^{-1} & 0  \\
		\star & \star & \star  & -\tau_{2,T}
		\end{bmatrix} \preccurlyeq 0, \label{eq:TermCostReform1} \\
		\text{then,} \qquad \displaystyle & \max_{x\in \mathcal{X}_{N|k}} x\tr \resp{P_C} x  \le \gamma_{T}. \label{eq:TermCostReform2}		
		 \end{align}
		\end{subequations} 		
	\end{enumerate}
\end{proposition}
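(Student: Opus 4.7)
The proof follows the template of Proposition \ref{Prop:Inclusion}, combining Lemmas \ref{Lem:Quadr}, \ref{Lem:S_Proc}, and \ref{Lem:Schur} with a final congruence transformation. The plan in both parts is to rewrite the worst-case cost as a quadratic implication in the error coordinate $e_{l|k} = x_{l|k} - z_{l|k}$, convert it into an unconstrained matrix inequality via the S-procedure and Lemma \ref{Lem:Quadr}, and then introduce the weight matrices $Q_x, Q_u, P_C$ in inverted form through Schur complement.

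For part (i), I would substitute $x = z_{l|k} + e_{l|k}$ and $u = K e_{l|k} + v_{l|k}$ to expand the stage cost as the quadratic form
\begin{align*}
e_{l|k}\tr(Q_x + K\tr Q_u K) e_{l|k} + 2(z_{l|k}\tr Q_x + v_{l|k}\tr Q_u K) e_{l|k} + z_{l|k}\tr Q_x z_{l|k} + v_{l|k}\tr Q_u v_{l|k}.
\end{align*}
The condition $\max_{x \in \mathcal{X}_{l|k}}(x\tr Q_x x + u\tr Q_u u) \le \gamma_{l|k}$ is then the implication that this quadratic minus $\gamma_{l|k}$ is nonpositive whenever $e_{l|k}\tr P e_{l|k} \le \alpha_{l|k}^{2}$. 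Applying Lemma \ref{Lem:S_Proc} with multiplier $1/\tau_{4,l|k}$ and Lemma \ref{Lem:Quadr} produces an $(n_x+1)$-dimensional symmetric matrix inequality whose leading block is $Q_x + K\tr Q_u K - \tau_{4,l|k}^{-1} P$ and whose lower-right scalar is $z_{l|k}\tr Q_x z_{l|k} + v_{l|k}\tr Q_u v_{l|k} - \gamma_{l|k} + \tau_{4,l|k}^{-1}\alpha_{l|k}^{2}$. Factoring the quadratic part through $\mathrm{diag}(Q_x, Q_u)$ as $[I;K]\tr \mathrm{diag}(Q_x,Q_u)[I;K]$ and invoking Lemma \ref{Lem:Schur} twice---once to extract the $-Q_x^{-1}, -Q_u^{-1}$ blocks and once to decouple $\alpha_{l|k}^{2}/\tau_{4,l|k}$ into a $-\tau_{4,l|k}$ block with off-diagonal $\alpha_{l|k}$---gives a five-block LMI that matches \eqref{eq:StageCostReform1} after a congruence by $\mathrm{diag}(\tau_{4,l|k} I_{n_x}, 1, I_{n_x}, I_{n_u}, 1)$, which clears the remaining $\tau_{4,l|k}^{-1}$ factors and produces the $\tau_{4,l|k} I_{n_x}, \tau_{4,l|k} K\tr, -\tau_{4,l|k} P$ terms in the stated positions.

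Part (ii) is entirely analogous but simpler, with no input contribution. Writing $x = z_{N|k} + e$ expands $x\tr P_C x$ as a quadratic in $e$; applying Lemma \ref{Lem:S_Proc} with multiplier $1/\tau_{2,T}$ against $e\tr P e \le \alpha_{N|k}^{2}$, followed by Schur complement to replace $P_C$ by $-P_C^{-1}$ and to decouple $\alpha_{N|k}^{2}/\tau_{2,T}$ through a $-\tau_{2,T}$ block with off-diagonal $\alpha_{N|k}$, yields \eqref{eq:TermCostReform1} after a final congruence by $\mathrm{diag}(\tau_{2,T} I_{n_x}, 1, I_{n_x}, 1)$.

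The main obstacle is bookkeeping rather than insight: the LMIs encode the S-procedure multiplier as $1/\tau$ instead of $\tau$, so care is needed to track factors consistently through the two Schur steps and the final congruence. No deeper difficulty arises, since only sufficiency is claimed (one-way implication suffices for upper-bounding the maximum), there is a single ellipsoidal constraint on $e$ so the S-procedure step is tight, and every remaining manipulation is a standard matrix identity.
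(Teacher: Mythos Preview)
Your proposal is correct and follows precisely the template the paper indicates: the paper omits the proof, stating only that it ``is similar to that of Proposition \ref{Prop:Inclusion}''. Your reconstruction---expand the cost in $e_{l|k}=x_{l|k}-z_{l|k}$, apply Lemma \ref{Lem:S_Proc} with multiplier $1/\tau$, invoke Lemma \ref{Lem:Quadr}, then Schur complement out $Q_x,Q_u$ (resp.\ $P_C$) and the $\alpha^{2}/\tau$ term, and finish with a congruence by $\mathrm{diag}(\tau I,1,\ldots)$---is exactly that template, and the bookkeeping you describe matches the stated LMIs (the sign on the $(1,3)$ block in \eqref{eq:TermCostReform1} is immaterial, as a congruence by $\mathrm{diag}(-I_{n_x},1,I_{n_x},1)$ flips it).
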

Using Proposition \ref{Prop:CostReform}, \eqref{eq:Costs} can be rewritten as $ J(\mathcal{X}_{l|k},u_{l|k}) = \gamma_{l|k}$ and $ \: J_T(\mathcal{X}_{N|k}) = \gamma_{T} $ if the constraints \eqref{eq:StageCostReform1} and \eqref{eq:TermCostReform1} are included in the online optimization with $ \gamma_{l|k},\gamma_{T} $ as variables. This is because the cost bounds $ \gamma_{l|k},\gamma_{T} $ will be minimized in the MPC optimization problem, ensuring that the inequalities in \eqref{eq:StageCostReform2} and \eqref{eq:TermCostReform2} will be tight. Note that for \eqref{eq:StageCostReform2} and \eqref{eq:TermCostReform2} to be tight, it is also assumed that the sets $ \{\mathcal{X}_{l|k}\}_{l=0}^{N}$ have a non-empty interior, so that the S-procedure condition from Lemma \ref{Lem:S_Proc} is necessary and sufficient. 

\subsection{Offline design}\label{Sec:OfflineDesign}

It can be seen from the reformulation of constraints in Section \ref{Sec:Constr_Refo} that the design of the feedback term $K$ and the ellipsoid shape matrix $P$ affect the propagation of the tube. The shape of the ellipsoid $ \bar{\mathcal{X}} $ affects the rate at which the state tube grows, and also determines the size of the terminal set. As a design objective, a large terminal set is desired to maximize the region of attraction. In this section, an offline optimization problem will be formulated to compute $ P$ and $ K $. \resp{In addition, a possible design procedure to select $P_C$ will be illustrated. The offline design aims at satisfying} the desired properties of the closed loop system, that is, recursive feasibility and stability. 

In order to ensure recursive feasibility, the terminal sets \resp{of MPC controllers are} designed to be invariant under a terminal controller. \resp{That is, the terminal set $ \mathcal{X}_T$ satisfies
\begin{align}\label{eq:Invariance_normal}
 x^+= Ax + B u + B_p p + B_w w  \in \mathcal{X}_T , \quad  \forall x\in \mathcal{X}_T, \Delta \in\mathcal{D}, w\in\mathcal{W}.
\end{align}}
In tube MPC methods, the desired property of $ \mathcal{X}_T $ is that the set-dynamics of the state tube are invariant under the terminal controller \resp{$ u=Kx $}. In the ellipsoidal tube framework, this can be written as 
\begin{align}\label{eq:Invariance_def}
\begin{split}
\forall (z,\alpha) \quad \text{s.t.} \quad
  \{x | &(x-z)\tr P (x-z) \le \alpha^2\}  \subseteq \mathcal{X}_T, \\ \exists (z^{+},\alpha^{+}) \quad \text{s.t.} &\quad x^{+} \in \{x | (x-z^{+})\tr P (x-z^{+}) \le (\alpha^{+})^2\}  \subseteq \mathcal{X}_T ,
\end{split}
\end{align}
where $ x^+ = Ax + B u + B_p p + B_w w $ follows the dynamics \eqref{eq:Dynamics}-\eqref{eq:wBound} with $ u=Kx $, and $ \Delta \in\mathcal{D}, w\in\mathcal{W} $. Note that \eqref{eq:Invariance_def} is a stronger condition compared to \resp{\eqref{eq:Invariance_normal}}. This is because the homothetic tube outer-approximates the true system evolution \resp{in \eqref{eq:InclusionRefo}}, and the invariance of the system dynamics does not automatically guarantee existence of $ (z^{+},\alpha^{+}) $ satisfying \eqref{eq:Invariance_def}. \resp{ However, \eqref{eq:Invariance_def} is difficult to use as a design condition for the terminal set, because the existence of a new ellipsoid (defined by $z^+,\alpha^+$) inside the terminal set must be guaranteed for any given ellipsoid (defined by $z,\alpha$) within the terminal set. In the following proposition, a sufficient condition is derived such that \eqref{eq:Invariance_normal} is satisfied, which also guarantees that \eqref{eq:Invariance_def} holds.
}



\resp{
	\begin{proposition}\label{Prop:Invariance}
		If there exist constants $ \tau_{1,O}, \{\tau_{2,O,j}\}_{j=1}^{\delta}, \tau_{3,O} \in \mathbb{R}_{> 0} $ such that 
		\begin{subequations}\label{eq:Invariance}
		\begin{align}
		\begin{bmatrix}
			-\tau_{1,O}P^{-1}  &  0 & 0 & P^{-1}A\tr+Y\tr B\tr & P^{-1}C_q\tr+Y\tr D_u\tr\\
			\star & -T_{2,O}P_{\Delta} & 0 & T_{2,O}B_{p}\tr & 0 \\
			\star & \star & -\tau_{3,O} P_w & B_w\tr & D_w\tr \\
			\star & \star & \star &  -P^{-1} & 0 \\
			\star & \star & \star & \star & -T_{2,O} \\
		\end{bmatrix} &\preccurlyeq 0, \label{eq:Invariance1}\\
		\tau_{1,O} + \tau_{3,O} &\le 1, \label{eq:Invariance2}
		\end{align}
	\end{subequations}
		where $ Y=KP^{-1} , T_{2,O}= \text{diag}\{\tau_{2,O,1} I_{n_{\Delta_1}}, \ldots, \tau_{2,O,\delta} I_{n_{\Delta_\delta}} \}$, then the terminal set $\mathcal{X}_T$ satisfies \eqref{eq:Invariance_def}.  
	\end{proposition}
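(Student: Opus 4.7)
My plan is to first derive the LMI \eqref{eq:Invariance1} together with the scalar inequality \eqref{eq:Invariance2} as a sufficient condition for the standard robust invariance \eqref{eq:Invariance_normal} of $\mathcal{X}_T = \{x : x\tr P x \le 1\}$ under the terminal law $u = Kx$, and then observe that once \eqref{eq:Invariance_normal} holds the stronger tube-level condition \eqref{eq:Invariance_def} follows from the trivial choice $(z^+,\alpha^+) = (0,1)$. The LMI derivation reuses the toolkit of Proposition \ref{Prop:Inclusion}: quadratic-form representation, S-procedure, Lemma \ref{Lem:Quadr}, Schur complement, and a final congruence that linearizes the products of $P$ with $K$.

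Concretely, writing $x^+ = (A+BK)x + B_p p + B_w w$ and $q = (C_q+D_u K)x + D_w w$, the invariance requirement becomes $(x^+)\tr P x^+ \le 1$ for every $\zeta := [x\tr,\, p\tr,\, w\tr]\tr$ obeying the admissibility inequalities $x\tr P x \le 1$, $\{p\tr \Pi_j\tr P_\Delta \Pi_j p \le q\tr \Pi_j\tr \Pi_j q\}_{j=1}^{\delta}$, and $w\tr P_w w \le 1$. Applying Lemma \ref{Lem:S_Proc} with multipliers $\tau_{1,O}, \tau_{2,O,j}, \tau_{3,O} \ge 0$ and separating the constant from the quadratic contributions splits the sufficient condition into two pieces: the constant part $-1 + \tau_{1,O} + \tau_{3,O} \le 0$ is exactly \eqref{eq:Invariance2}, while the quadratic part becomes, via Lemma \ref{Lem:Quadr}, a symmetric matrix inequality in $P$, $K$, and the multipliers. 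Two Schur-complement steps extract $P$ from $(A+BK)\tr P (A+BK)$ and $T_{2,O}^{-1}$ from $(C_q+D_u K)\tr T_{2,O}^{-1}(C_q+D_u K)$, producing a block form analogous to \eqref{eq:InclusionRefo} but with the $d_{l|k}$ and $\alpha$-dependent entries absent. A final congruence by $\mathrm{diag}\{P^{-1}, T_{2,O}, I_{n_w}, I_{n_x}, I_{n_\Delta}\}$ together with the change of variable $Y = KP^{-1}$ yields exactly \eqref{eq:Invariance1}.

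For the second step, assume \eqref{eq:Invariance_normal} holds and pick any $(z,\alpha)$ with $\{x : (x-z)\tr P (x-z) \le \alpha^2\} \subseteq \mathcal{X}_T$. Every admissible state $x$ in this ellipsoid belongs to $\mathcal{X}_T$, so for every $\Delta \in \mathcal{D}$ and $w \in \mathcal{W}$ the successor $x^+$ also belongs to $\mathcal{X}_T$ by \eqref{eq:Invariance_normal}. The choice $(z^+, \alpha^+) = (0,1)$ reproduces $\mathcal{X}_T$ itself, so both the membership $x^+ \in \{x : (x-z^+)\tr P(x-z^+) \le (\alpha^+)^2\}$ and the inclusion $\{x : (x-z^+)\tr P(x-z^+) \le (\alpha^+)^2\} \subseteq \mathcal{X}_T$ required by \eqref{eq:Invariance_def} hold immediately.

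The main obstacle I anticipate is the bookkeeping in the final congruence: the order in which the Schur complements and the congruence by $\mathrm{diag}\{P^{-1}, T_{2,O}, \ldots\}$ are applied must be chosen so that the resulting inequality is jointly linear in $P^{-1}$, $Y$, and the multipliers rather than in $P$, $K$. In particular, the $(1,1)$ block must end up as $-\tau_{1,O} P^{-1}$, the entries involving $K$ must appear only through $Y = KP^{-1}$, and the two lower-right blocks must be $-P^{-1}$ and $-T_{2,O}$ to match \eqref{eq:Invariance1}; this requirement essentially dictates the sequence of the algebraic manipulations, but no new conceptual step beyond those used in Proposition \ref{Prop:Inclusion} is needed.
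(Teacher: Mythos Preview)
Your proposal is correct and follows essentially the same two-step structure as the paper: first argue that \eqref{eq:Invariance} is a sufficient condition for the standard robust invariance \eqref{eq:Invariance_normal}, then deduce \eqref{eq:Invariance_def} via the choice $(z^+,\alpha^+)=(0,1)$. Two minor differences are worth noting. For the first step, the paper does not redo the S-procedure/Schur manipulations from scratch but instead observes that \eqref{eq:Invariance_normal} is the special case $z_{l|k}=v_{l|k}=z_{l+1|k}=0$, $\alpha_{l|k}=\alpha_{l+1|k}=1$ of \eqref{eq:RMPC5}, invokes Proposition~\ref{Prop:Inclusion} to obtain the $6\times6$ LMI, and then applies the congruence by $\mathrm{diag}\{P^{-1},I_{n_\Delta},I_{n_w},1,I_{n_x},I_{n_\Delta}\}$; the scalar row/column decouples into \eqref{eq:Invariance2} and the remaining $5\times5$ block is \eqref{eq:Invariance1}. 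For the second step, your argument is actually more elementary than the paper's: the paper re-expresses \eqref{eq:Invariance} through the quadratic forms $m_{x^+,O},m_{x,O},m_{\Delta,O,j},m_{w,O}$ and chains an extra S-procedure multiplier $\tau_x$ (coming from the inclusion of the small ellipsoid in $\mathcal{X}_T$) to reach the conclusion, whereas you simply use the set containment $\{x:(x-z)^\top P(x-z)\le\alpha^2\}\subseteq\mathcal{X}_T$ together with \eqref{eq:Invariance_normal}. Both routes land on the same choice $(z^+,\alpha^+)=(0,1)$ and are valid.
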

\begin{proof}
	
	The proposition will be proven by first showing that \eqref{eq:Invariance} can be obtained from \eqref{eq:Invariance_normal}. Then, it will be shown that the  \eqref{eq:Invariance} is a sufficient condition to satisfy \eqref{eq:Invariance_def}.
	
	Under the chosen parameterization \eqref{eq:Ellipsoid_X0}-\eqref{eq:TermSetPara}, it can be seen that the condition \eqref{eq:Invariance_normal} is a special case of \eqref{eq:RMPC5} with $z_{l|k}=v_{l|k}=z_{l+1|k}=0$ and $\alpha_{l|k}=\alpha_{l+1|k}=1$. Then, using Proposition \ref{Prop:Inclusion}, \eqref{eq:Invariance_normal} is satisfied if there exist positive scalars  $ \tau_{1,O}, \{\tau_{2,O,j}\}_{j=1}^{\delta}, \tau_{3,O} \in \mathbb{R}_{> 0} $ such that
	\begin{align}\label{eq:InvarianceRefo1}
	\begin{bmatrix}
	- \tau_{1,O} P & 0 & 0 & 0 &(A+BK)\tr & (C_q{+}D_u K)\tr \\
	\star & -T_{2,O} P_{\Delta} & 0 & 0 & T_{2,O} B_{p}\tr  & 0 \\
	\star & \star & -\tau_{3,O} P_w & 0 & B_w\tr & D_w\tr \\
	\star & \star & \star & \tau_{1,O} + \tau_{3,O} - 1 & 0 & 0 \\
	\star &\star &\star &\star & - P^{-1} & 0 \\
	\star &\star &\star &\star &\star & -T_{2,O} 
	\end{bmatrix} \preccurlyeq 0.
	\end{align} 
	Pre- and post- multiplying \eqref{eq:InvarianceRefo1} by $ \text{diag}\{P^{-1},I_{n_{\Delta}},I_{n_w},1,I_{n_x},I\} $ gives 
	\begin{align}\label{eq:InvarianceRefo2}
	\begin{bmatrix}
	- \tau_{1,O} P^{-1} & 0 & 0 & 0 &P^{-1}A\tr+Y\tr B\tr & P^{-1}C_q\tr+Y\tr D_u\tr \\
	\star & -T_{2,O} P_{\Delta} & 0 & 0 & T_{2,O} B_{p}\tr  & 0 \\
	\star & \star & -\tau_{3,O} P_w & 0 & B_w\tr & D_w\tr \\
	\star & \star & \star & \tau_{1,O} + \tau_{3,O} - 1 & 0 & 0 \\
	\star &\star &\star &\star & - P^{-1} & 0 \\
	\star &\star &\star &\star &\star & -T_{2,O} 
	\end{bmatrix} \preccurlyeq 0,
	\end{align}
	which can be decomposed into \eqref{eq:Invariance1} and \eqref{eq:Invariance2}. 
	
	It is now shown that \eqref{eq:Invariance} implies \eqref{eq:Invariance_def} is satisfied with $z^+=0, \alpha^+=1$. That is,
	\begin{align}\label{eq:InvarianceRefo3}
	\begin{split}
	\forall (z,\alpha) \quad \text{s.t.} \quad
	\{x | &(x-z)\tr P (x-z) \le \alpha^2\}  \subseteq \mathcal{X}_T, \quad x^{+} \in \mathcal{X}_T.
	\end{split}
	\end{align}
	Using the quadratic forms in \eqref{eq:QuadFormInclusion}, define 
	\begin{align}
	m_{x^+,O} = m_{x^+,l}, \: m_{x,O} = m_{x,l},\: m_{\Delta,O,j}=m_{\Delta,l,j}, \: m_{w,O} = m_{w,l}, 
	\end{align}
	by substituting $ z_{l|k}=v_{l|k}=z_{l+1|k}=0, \:\alpha_{l|k}=\alpha_{l+1|k}=1 $ and replacing the vector $\begin{bmatrix}
	e_{l|k} \tr & p_{l|k} \tr & w_{l|k}\tr
	\end{bmatrix}\tr $ with $\begin{bmatrix}
	e \tr & p \tr & w\tr
	\end{bmatrix}\tr $.
	Then, using Lemmas \ref{Lem:Quadr} and \ref{Lem:Schur}, the condition  \eqref{eq:Invariance} is equivalent to 
	\begin{align}\label{eq:InvarianceRefo4}
	m_{x^+,O} - \tau_{1,O} m_{x,O} - \sum_{j=1}^{\delta}\tau_{2,O,j}^{-1} m_{\Delta,O,j} -  \tau_{3,O}  m_{w,O} \le 0.
	\end{align}
	For any $(z,\alpha)$ such that $\{x | (x-z)\tr P (x-z) \le \alpha^2\}  \subseteq \mathcal{X}_T$, let 
	$ \tilde{m}_{x,O} = (x-z)\tr P (x-z) - \alpha^2 $. Applying S-procedure from Lemma \ref{Lem:S_Proc}, there exists a constant $\tau_x\ge0$  such that $ m_{x,O} -\tau_x \tilde{m}_{x,O} <=0 $. Then, \eqref{eq:InvarianceRefo4} can be written as
	\begin{align}\label{eq:InvarianceRefo5}
	m_{x^+,O} - \tau_{1,O}\tau_x \tilde{m}_{x,O}- \sum_{j=1}^{\delta}\tau_{2,O,j}^{-1} m_{\Delta,O,j} -  \tau_{3,O}  m_{w,O} \le 0,
	\end{align}
	which is a sufficient condition for satisfying \eqref{eq:InvarianceRefo3} using Lemma \ref{Lem:S_Proc}.  
\end{proof}
}
 Note that in the offline design phase $ P $ and $ K $ are unknown, and are to be computed as the solution of an optimization problem. In this offline optimization problem, $ P^{-1} $ and $ Y $ are decision variables. The term $ -\tau_{1,O}P^{-1} $ in the inequality \eqref{eq:Invariance1} is thus bilinear. When optimizing to compute $ P^{-1} $, a line search can be used for different values of $ \tau_{1,O} \in (0,1) $ in the constraint \eqref{eq:Invariance1} so that it is an LMI. 
An additional requirement from the terminal set to ensure recursive feasibility is that it lies inside the state and input constraints under the terminal control law. That is, 
\begin{align}\label{eq:Constr_off}
\begin{split}
F+GK x &\le \mathbf 1,	\quad \forall x: x\tr P x\le 1
\iff \begin{bmatrix}
-1 & \left[FP^{-1} +GY \right]_i \\
\star & -P^{-1} 
\end{bmatrix} \preccurlyeq 0, \quad \forall i\in\mathbb{N}_{1}^{n_c}.
\end{split}
\end{align}
To obtain the above reformulation, the result $ \displaystyle \max_{x\in \{x|x\tr P x\le 1 \} } f\tr x  = \sqrt{f\tr P^{-1} f} $ was used along with the Schur complement Lemma. 

In addition to recursive feasibility of the optimization, a desired property is the closed loop robust stability of the system. The condition on $ P $ and $ K $ required to ensure robust stability is that under the terminal control law, there exists $ \lambda_c\in (0,1) $ such that
\begin{align}\label{eq:Contractivity_cond}
\norm{(A+BK)x + B_p p}_{P} \le \lambda_c \norm{x}_{P}, \quad  \forall (x,p): p\tr \Pi_{j}\tr P_{\Delta} \Pi_{j} p \le x\tr (C+DK)\tr\Pi_{j}\tr \Pi_{j} (C+DK) x, \: j\in \mathbb{N}_{1}^{\delta}.
\end{align}
Using Lemmas \eqref{Lem:Quadr}, \eqref{Lem:S_Proc} and \eqref{Lem:Schur}, \resp{ it can be easily shown that \eqref{eq:Invariance} is a sufficient condition to satisfy \eqref{eq:Contractivity_cond} for any $\lambda_c \ge \sqrt{\tau_{1,O}}$.}


Having formulated all the design requirements on the ellipse shape $ P $ and the feedback gain $ K $ to ensure stability and recursive feasibility, an offline optimization problem can now be solved to compute $ P $ and $ K $. To maximize the region of attraction of the controller, the objective  is to maximize the size of the terminal set, \resp{ which can be achieved by minimizing the determinant of $P$.} Thus, the offline optimization problem is given by
\begin{align}\label{eq:OfflineOpt}
\begin{split}
	\displaystyle\min_{\substack{P^{-1},Y,\\ \tau_{1,O},T_{2,O},\tau_{3,O}
	}} \:  -\log\det (P^{-1}) &  \\
	\text{s.t.} \qquad     
	P^{-1} \succ 0, \: T_{2,O}&\succ 0, \\
	 \tau_{1,O}> 0, \:  \tau_{3,O}&> 0,\\
	\eqref{eq:Invariance1},  \eqref{eq:Invariance2}, &\eqref{eq:Constr_off}.  
\end{split}
\end{align}
The values of $ P $, $ K=YP$ can be computed from the solution of \eqref{eq:OfflineOpt}. \resp{As already mentioned, a line search can be performed to choose $\tau_{1,O} \in (0,1)$ in order to remove the bilinearity in \eqref{eq:Invariance1}}.
 The line search ensures that \eqref{eq:OfflineOpt} can be replaced by a finite number of convex semidefinite programs to compute multiple feasible designs of $ P,K $ such that they guarantee recursive feasibility and stability. Among these designs, the $ P $ with the smallest determinant is chosen for MPC. Because the optimization \eqref{eq:OfflineOpt} is performed offline and only once before starting the MPC problem, the grid search can be performed for any desired coarseness of the grid. 

\begin{remark}
	The offline optimization problem \eqref{eq:OfflineOpt} offers a flexible way to impose additional properties on the terminal set and the feedback gain. For example, a decentralized terminal set and feedback gain can be obtained by imposing a block diagonal structure on the variable $ P^{-1} $, which enables the design of distributed tube MPC controllers \cite{parsi2021distributed}.  In addition, the shape of the terminal set can be altered by modifying the cost function in \eqref{eq:OfflineOpt} using a weighted determinant, thereby increasing the size of the terminal set in desired directions. This flexibility is not available in polytopic tube MPC methods, where the terminal set and state tube shape are constructed by iterative  intersections of polytopic sets with hyperplanes\cite{kolmanovsky1998theory,didier2021}. 
\end{remark}

\resp{After computing the terminal controller and the shape of the state tube, the final offline design step is to choose a terminal cost matrix $P_C$. Because MPC can be seen as an approximation of the infinite horizon optimization problem, ideally, the terminal cost captures the cost-to-go until infinite time. However, in the presence of disturbances $w_k$, such a cost is not well-posed. An alternative strategy is to consider the cost-to-go when the disturbances are zero, as shown in the following proposition. 

\begin{proposition}\label{Prop:P_Cdesign}
	Let the offline optimization problem \eqref{eq:OfflineOpt} have a feasible solution for $K$. If there exist constants $\{\tau_{4,O,j}\}_{j=1}^{\delta}$ and a matrix $P_C \in \mathbb{R}^{n_x\times n_x}$ such that
	\begin{subequations}\label{eq:P_Cdesign} 
	\begin{align}
	\begin{bmatrix}
		(A+BK)\tr P_C (A+BK) - P_C + Q_x & (A+BK)\tr P_C B_p\\\vspace{0.3cm}
		 + K\tr Q_u K +  (C_q+D_u K)\tr T_{4,O} (C_q + D_uK) & \\
		\star & -T_{4,O} P_{\Delta} + B_p\tr P_C B_p
	\end{bmatrix} &\preccurlyeq 0  \label{eq:P_Cdesign1}\\
	P_C \succcurlyeq 0, \:  T_{4,O} &\succeq 0	\label{eq:P_Cdesign2}
	\end{align} 	
	\end{subequations}
	where $T_{4,O}= \text{diag}\{\tau_{4,O,1} I_{n_{\Delta_1}}, \ldots, \tau_{4,O,\delta} I_{n_{\Delta_\delta}} \}$, then the system \eqref{eq:Dynamics} under the controller $u_k=Kx_k$  with $w_k =0 $ for all $k>0$ satisfies the cost bound
	\begin{align}\label{eq:P_Cbound}
	\sum_{l=k}^{\infty} x_l\tr Q_x x_l + u_l\tr Q_u u_l \le x_k\tr P_c x_k
	\end{align}
\end{proposition}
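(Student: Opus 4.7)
The proposal is to show that $P_C$ serves as a Lyapunov-like function for the autonomous loop $u_k=Kx_k$, $w_k=0$, whose one-step decrease dominates the stage cost, and then telescope. All pieces are classical S-procedure / Lyapunov arguments applied to the IQC-structured uncertainty.

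First I would pre- and post-multiply the block matrix in \eqref{eq:P_Cdesign1} by the vector $\begin{bmatrix} x_k\tr & p_k\tr \end{bmatrix}\tr$ and its transpose. Under $u_k=Kx_k$ and $w_k=0$, the dynamics \eqref{eq:Dynamics} give $x_{k+1}=(A+BK)x_k + B_p p_k$ and $q_k = (C_q+D_u K)x_k$, so the quadratic form collapses to
\begin{align*}
x_{k+1}\tr P_C x_{k+1} - x_k\tr P_C x_k + x_k\tr Q_x x_k + u_k\tr Q_u u_k + q_k\tr T_{4,O}\, q_k - p_k\tr T_{4,O} P_\Delta\, p_k \le 0.
\end{align*}

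Next I would invoke the block structure of $\Delta_k$. Because $P_\Delta$ is compatible with the partition induced by the projectors $\Pi_j$ and $T_{4,O}=\mathrm{diag}\{\tau_{4,O,j} I_{n_{\Delta_j}}\}$ with $\tau_{4,O,j}\ge 0$, summing the IQCs \eqref{eq:pq_Delta} against these multipliers yields $p_k\tr T_{4,O} P_\Delta\, p_k \le q_k\tr T_{4,O}\, q_k$. Substituting this into the inequality above gives the key Lyapunov-type decrease
\begin{align*}
x_{k+1}\tr P_C x_{k+1} - x_k\tr P_C x_k \le -\bigl(x_k\tr Q_x x_k + u_k\tr Q_u u_k\bigr),
\end{align*}
valid for every $\Delta_k\in\mathcal{D}$.

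Then I would telescope this bound from time $k$ to $N$ to obtain
\begin{align*}
\sum_{l=k}^{N}\bigl(x_l\tr Q_x x_l + u_l\tr Q_u u_l\bigr) \le x_k\tr P_C x_k - x_{N+1}\tr P_C x_{N+1} \le x_k\tr P_C x_k,
\end{align*}
and pass to $N\to\infty$. To justify dropping the terminal term (rather than just bounding it by zero, which requires $P_C\succcurlyeq 0$ from \eqref{eq:P_Cdesign2}), I would invoke the contractivity property \eqref{eq:Contractivity_cond} that follows from the offline design: with $w=0$, $\|x_{l+1}\|_P\le \lambda_c\|x_l\|_P$ for some $\lambda_c<1$, so $x_l\to 0$ geometrically and therefore $x_{N+1}\tr P_C x_{N+1}\to 0$. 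This yields \eqref{eq:P_Cbound}.

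The main obstacle, and the only subtle step, is the IQC reduction from $p_k\tr T_{4,O} P_\Delta\, p_k \le q_k\tr T_{4,O}\, q_k$: one must check carefully that the block-diagonal structure of $P_\Delta$ aligns with the partition of $\Delta_k$ so the weighted sum of the per-block IQCs is exactly this inequality. Everything else is a direct S-procedure/Schur manipulation analogous to Proposition \ref{Prop:Inclusion}, and convergence of $x_l$ to the origin under $w=0$ is an immediate consequence of the contractivity already established in Section \ref{Sec:OfflineDesign}.
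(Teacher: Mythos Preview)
Your proposal is correct and follows essentially the same route as the paper: derive the one-step Lyapunov decrease $x_{l+1}\tr P_C x_{l+1} - x_l\tr P_C x_l + x_l\tr(Q_x+K\tr Q_u K)x_l \le 0$ from \eqref{eq:P_Cdesign} via the S-procedure, telescope, and use the contractivity \eqref{eq:Contractivity_cond} (with $w=0$) to kill the terminal term. Your write-up is in fact more explicit than the paper's, which compresses the S-procedure step into a single sentence; your flagging of the block-diagonal compatibility of $P_\Delta$ with the partition $\{\Pi_j\}$ is a genuine subtlety that the paper uses silently (cf.\ the passage from \eqref{eq:Polynomial3} to \eqref{eq:QuadFormInclusion}).
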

The proof of the above proposition is given in Appendix \ref{Ap:P_Cdesignproof}. Thus, after the terminal controller $K$ and the state tube shape $P$ are chosen, the cost matrix $P_C$ is computed such that its trace is minimized and \eqref{eq:P_Cdesign} is satisfied. 
}

\section{Ellipsoidal tube MPC}\label{Sec:ETMPC}
In this section, the ellipsoidal tube MPC (ETMPC) algorithm is described and its properties are discussed. Specifically, it is shown that the proposed method ensures constraint satisfaction, is recursively feasible and input-to-state practically stable. First, the online optimization problem to be solved is described. The optimization variables are given by
\begin{align}\label{eq:OnlOptiVar}
R_k = \left\{ 
\begin{array}{l}
\bigl\{z_{l|k},\alpha_{l|k}\bigr\}_{l=0}^{N}, \: \bigl\{v_{l|k}, \{ \tau_{j,l|k}\}_{j=1,3,4}, T_{2,l|k} , \gamma_{l|k} \bigr\}_{l=0}^{N-1} , \tau_{1,T}, \tau_{2,T}, \gamma_{T}
\end{array} \right\}.
\end{align}
The online optimization problem to be solved at each time step can be then written as 
\begin{align}\label{eq:OnlOptiProb}
\begin{split}
\displaystyle\min_{R_k } \: \sum_{l=0}^{N-1} \left( \gamma_{l|k} \right) \:  + \gamma_{T}  \qquad &  \\
\text{s.t.} \qquad \tau_{1,T} > 0, \: \tau_{2,T} > 0, \: \gamma_{T} &> 0, \\
 \tau_{1,l|k}> 0,\:  T_{2,l|k}\succ 0,\: \tau_{3,l|k}, \: \tau_{4,l|k},  \: \alpha_{l|k}> 0, \gamma_{l|k} &> 0 , \quad \forall l\in \mathbb{N}_{0}^{N-1}, \\
 \eqref{eq:InitCond}, \eqref{eq:StateInputCons}, \eqref{eq:TermConsRefo}, \eqref{eq:TermCostReform1}, \\
 \eqref{eq:InclusionRefo}, \eqref{eq:StageCostReform1}, \quad &\forall l\in \mathbb{N}_{0}^{N-1}.
\end{split}
\end{align}
The total number of optimization variables is $\left( (n_x + 1)(N+1) + (n_u+n_{\delta} + 4)N + 3 \right) $, which only increases linearly with the order of the system ($ n_x $), number of inputs ($ n_u $), prediction horizon ($ N $) and the number of independent sources of uncertainty ($ n_{\delta} $). The scalability of the algorithm will also be demonstrated in Section \ref{Sec:Simulations} using simulation examples. The ETMPC algorithm is described in Algorithm \ref{Alg:ETMPC}.

\begin{algorithm} [t]
	\caption{Ellipsoidal tube MPC}\label{Alg:ETMPC} 
	\begin{algorithmic}[1]
		\Statex \textbf{Offline:} 
		\State Grid $ \tau_{1,O} \in (0,1) $
		\State Compute $ P,K $ by solving \eqref{eq:OfflineOpt} for each $\tau_{1,O} $
		\State Choose $ P,K $ minimizing $ \det (P) $
		\resp{\State Compute $P_C$ satisfying \eqref{eq:P_Cdesign}  and minimizing  $ \text{trace}(P_C) $}
		\Statex
		\Statex \textbf{Online:}
		\State $ k\gets 1 $
		\MRepeat 
		\State Obtain the measurement $ x_k $ 
		\State Solve optimization problem \eqref{eq:OnlOptiProb}
		\State Apply the control input  $ u_k = K (x_k-z_{0|k}) + v_{0|k}$
		\State $ k \gets k+1 $
		\EndRepeat	
	\end{algorithmic}
\end{algorithm}

\subsection{Recursive feasibility and Stability}

In order to prove the stability of the closed loop, the notion of regional input-to-state practical stability (ISpS) is now introduced. 
\begin{definition}(Regional ISpS in $ \mathbb{X} $ \cite{raimondo2009min})
	Given a system whose dynamics can be described by \eqref{eq:Dynamics}-\eqref{eq:wBound}, and a compact set $ \mathbb{X} \in \mathbb{R}^{n_x} $ including the origin as an interior point, the system is said to be ISpS (input-to-state practically stable) in $ \mathbb{X} $ with respect to $ w_k $ if $ \mathbb{X} $ is a robust positively invariant set and if there exist a $ \mathscr{K} \mathscr{L} $ function $ \beta(\cdot,\cdot) $, a $ \mathscr{K} $ function $ \delta(.) $ and a constant $ c\ge0 $ such that, for all $ \hat{x}_0 \in \mathbb{X} $ and $ t >0 $ 
	\begin{align}\label{eq:ISpS}
	\norm{\hat{x}_t} \le \beta \left( \norm{\hat{x}_0},t \right) + \delta_1 \left(\norm{\mathbf{w}}_{\infty} \right) + c,
	\end{align}
	where $ \mathbf{w} =  \{w_1,w_2,\ldots w_{t-1}\} $.
\end{definition}
Regional ISpS provides a useful way to analyze the stability of systems when a worst-case metric is used in the MPC controller design. If a system satisfies \eqref{eq:ISpS} with $ c=0 $, then the closed loop is said to be input-to-state stable (ISS) in $ \mathbb{X} $ with respect to $ w_k $. That is, for systems satisfying ISS, the state of the system can be bounded by $ \mathscr{K} $ functions of $ \norm{\mathbf{w}}_{\infty} $ alone. The definition for ISpS adds an extra term based on the size of the set $ \mathcal{W} $. 
The additional constant $ c $ is required to extend the notion of stability to the closed loop when a worst-case cost is used in MPC controllers, because the controller acts on the basis of the worst-case disturbance that can affect the system \cite{limon2009input}. Let $ \mathbb{X}_N (\mathcal{X}_T) $ represent the feasible region of the initial conditions $\hat{x}_k$ for problem \eqref{eq:OnlOptiProb}. The following theorem establishes the robust positive invariance of the set $ \mathbb{X}_N (\mathcal{X}_T) $, thereby ensuring recursive feasibility and stability of the closed loop.

\begin{theorem}\label{Th:RecFeasStab}
	Let the offline optimization problem \eqref{eq:OfflineOpt} have a feasible solution. Then, for any initial condition $ \hat{x}_0 \in\mathbb{X}_{N}(\mathcal{X}_T) $, the state trajectories of the closed loop formed by any system satisfying the dynamics \eqref{eq:Dynamics}-\eqref{eq:wBound} and the ETMPC controller defined in Algorithm \ref{Alg:ETMPC} remain in the set $ \mathbb{X}_{N}(\mathcal{X}_T) $. Moreover, the optimization problem \eqref{eq:OnlOptiProb} is feasible for all $ k>0 $ and the closed loop system is regionally input-to-state practically stable in the set $ \mathbb{X}_{N}(\mathcal{X}_T) $ with respect to $ w_k $.
\end{theorem}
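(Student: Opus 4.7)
The plan is to prove the theorem in two stages: first establishing recursive feasibility of \eqref{eq:OnlOptiProb} by constructing a candidate solution, then establishing regional ISpS via a Lyapunov argument using the optimal cost. For recursive feasibility, given the optimizer $R_k^{\star}$ at time $k$, I would build a candidate $\tilde{R}_{k+1}$ by the usual shift: for $l = 0,\ldots,N-1$ set $\tilde{z}_{l|k+1} = z_{l+1|k}^{\star}$ and $\tilde{\alpha}_{l|k+1} = \alpha_{l+1|k}^{\star}$; for $l = 0,\ldots,N-2$ set $\tilde{v}_{l|k+1} = v_{l+1|k}^{\star}$; and for the terminal step take $\tilde{z}_{N|k+1} = 0$, $\tilde{\alpha}_{N|k+1} = 1$ together with $\tilde{v}_{N-1|k+1} = K z_{N|k}^{\star}$ so that the prediction at index $N-1$ uses the terminal law $u = Kx$. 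The multipliers $\tau_{1,l|k+1}, T_{2,l|k+1}, \tau_{3,l|k+1}, \tau_{4,l|k+1}$ and cost bounds $\gamma_{l|k+1}, \gamma_T$ are likewise shifted, with the new terminal quantities $\tau_{1,T}, \tau_{2,T}$ and the new step $l = N-1$ ones chosen from the offline certificates $\tau_{1,O}, T_{2,O}, \tau_{3,O}, T_{4,O}$ supplied by Propositions \ref{Prop:Invariance} and \ref{Prop:P_Cdesign}.

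I would then verify each constraint of \eqref{eq:OnlOptiProb} for $\tilde{R}_{k+1}$. The initial-condition inequality \eqref{eq:InitCond} at time $k+1$ holds because the optimal tube at time $k$ satisfies $x_{1|k} \in \mathcal{X}_{1|k}^{\star}$ for every admissible $\Delta_{0|k}, w_{0|k}$, so in particular $\hat{x}_{k+1} \in \mathcal{X}_{1|k}^{\star} = \tilde{\mathcal{X}}_{0|k+1}$. The inclusion inequalities \eqref{eq:InclusionRefo} and the polytopic constraint \eqref{eq:StateInputCons} for $l = 0,\ldots,N-2$ hold by direct reindexing. For $l = N-1$ the inclusion \eqref{eq:InclusionRefo} reduces, under $\tilde{v}_{N-1|k+1} = K z_{N|k}^{\star}$ and $\tilde{\alpha}_{N|k+1} = 1$, to the invariance condition \eqref{eq:Invariance_def} applied to the pair $(z_{N|k}^{\star}, \alpha_{N|k}^{\star})$, which is guaranteed by Proposition \ref{Prop:Invariance} (combined with \eqref{eq:Constr_off} for the state–input constraint). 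Finally, the terminal constraint \eqref{eq:TermConsRefo} is trivially satisfied with $\tilde{z}_{N|k+1} = 0, \tilde{\alpha}_{N|k+1} = 1$. This shows $\hat{x}_{k+1} \in \mathbb{X}_N(\mathcal{X}_T)$, establishing robust positive invariance and recursive feasibility.

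For regional ISpS, the plan is to use the optimal cost $V^{\star}(\hat{x}_k) = \sum_{l=0}^{N-1} \gamma_{l|k}^{\star} + \gamma_T^{\star}$ as a Lyapunov function in the sense of \cite{raimondo2009min}. The lower bound $V^{\star}(\hat{x}_k) \ge \lambda_{\min}(Q_x) \|\hat{x}_k\|_2^2$ follows from $\hat{x}_k \in \mathcal{X}_{0|k}^{\star}$ and the tightness of \eqref{eq:StageCostReform2} for the maximizer $\hat{x}_k$. An upper bound of the form $V^{\star}(\hat{x}_k) \le c_1 \|\hat{x}_k\|_2^2 + c_2$ on a neighborhood of the origin is obtained by using a simple feasible candidate (for instance the nominal trajectory with $v_{l|k} = 0$ and small $\alpha_{l|k}$ adequate to absorb $\mathcal{W}$), together with compactness of $\mathbb{X}_N(\mathcal{X}_T)$ to extend the bound. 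The decrease step is the crucial calculation: substituting the candidate $\tilde{R}_{k+1}$ gives
\begin{align*}
V^{\star}(\hat{x}_{k+1}) - V^{\star}(\hat{x}_k) \le -\gamma_{0|k}^{\star} + \tilde{\gamma}_{N-1|k+1} + \tilde{\gamma}_T - \gamma_T^{\star}.
\end{align*}
Bounding $-\gamma_{0|k}^{\star} \le -\lambda_{\min}(Q_x)\|\hat{x}_k\|_2^2$ from below, and using the Lyapunov inequality \eqref{eq:P_Cdesign1} together with the worst-case cost reformulation \eqref{eq:TermCostReform1} applied to the homothetic terminal tube, the sum $\tilde{\gamma}_{N-1|k+1} + \tilde{\gamma}_T - \gamma_T^{\star}$ collapses to a non-positive contribution plus a disturbance term bounded by a $\mathscr{K}$ function of $\|w_k\|_2$ and a constant depending on $\mathcal{W}$ through $P_w^{-1}$. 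Assembling these pieces yields an inequality of the form $V^{\star}(\hat{x}_{k+1}) - V^{\star}(\hat{x}_k) \le -\alpha_3(\|\hat{x}_k\|_2) + \sigma(\|w_k\|_2) + c$, which is exactly the ISpS Lyapunov condition of \cite{raimondo2009min}, yielding \eqref{eq:ISpS}.

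The main obstacle I anticipate is the decrease step, specifically controlling $\tilde{\gamma}_{N-1|k+1} + \tilde{\gamma}_T - \gamma_T^{\star}$. Because the stage and terminal costs are defined as suprema over tubes rather than along a single trajectory, the standard telescoping used in nominal MPC does not directly apply, and one must instead combine the matrix inequality \eqref{eq:P_Cdesign1} (which is a contraction only when $w = 0$) with the tube-inclusion inequality \eqref{eq:InclusionRefo} evaluated at the shifted candidate to account for the growth of $\alpha$ induced by $w_{l|k} \in \mathcal{W}$. This growth is what forces the constant $c$ and the $\mathscr{K}$-term $\sigma(\|w_k\|_2)$ in \eqref{eq:ISpS} and explains why only practical (rather than asymptotic) stability can be claimed.
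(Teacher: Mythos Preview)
Your recursive-feasibility construction is essentially the paper's standard shift argument, with one small difference: you take the appended terminal tube to be $\tilde z_{N|k+1}=0$, $\tilde\alpha_{N|k+1}=1$ (the whole set $\mathcal{X}_T$), whereas the paper uses $z_{N|k+1}=(A+BK)z_{N|k}^{*}$, $\alpha_{N|k+1}=\alpha_{N|k}^{*}$. Both choices are justified through Proposition~\ref{Prop:Invariance}; in fact your choice is the one established directly in that proposition's proof (equation~\eqref{eq:InvarianceRefo3}), so this part is fine.

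Where your route genuinely diverges is the ISpS argument. The paper does \emph{not} build a Lyapunov function from the optimal cost at all. It simply observes that $\mathcal{C}$ is compact, hence $\mathbb{X}_N(\mathcal{X}_T)\subseteq\mathcal{C}$ is bounded, and since it has just been shown to be robustly positively invariant one may pick $c=\max_{x\in\mathbb{X}_N(\mathcal{X}_T)}\|x\|$ in \eqref{eq:ISpS}; the inequality then holds trivially for any $\beta,\delta_1$. That is the entire stability proof. The paper itself concedes, immediately after the theorem, that this bound is very conservative and that tightening it is left to future work. So the obstacle you flag in the decrease step---controlling $\tilde\gamma_{N-1|k+1}+\tilde\gamma_T-\gamma_T^{\star}$ when the costs are tube suprema and \eqref{eq:P_Cdesign1} only contracts for $w=0$---is precisely the difficulty the paper avoids rather than resolves. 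Your Lyapunov programme, if it could be closed, would yield a strictly more informative ISpS estimate than the paper's; but as stated it is a plan with an acknowledged gap, not a complete proof, and the paper's argument shows that for the theorem \emph{as written} (which only claims ISpS, not any quantitative decay) none of that machinery is needed.
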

\begin{proof}
	Based on the definition of the set $ \mathbb{X}_N (\mathcal{X}_T) $, the optimization problem  \eqref{eq:OnlOptiProb} is feasible at time $ k=0 $ because $ \hat{x_0}\in\mathbb{X}_{N}(\mathcal{X}_T) $. Then, it can be shown that if \eqref{eq:OnlOptiProb} is feasible at any time $ k $, a feasible solution exists at the time step $ k+1 $. The proof follows the standard argument to prove recursive feasibility, where the optimal solution at time $ k $ is used to compute a feasible solution at time $ k+1 $. Let the optimal solution at time $ k $ be denoted using a $ (^*) $ in the superscript of the variables. As a consequence of Proposition \ref{Prop:Invariance}, a candidate solution of  \eqref{eq:OnlOptiProb} at time $ k+1 $ can be written as
	\begin{align}\label{eq:CandidateSoln}
	\begin{split}
	z_{l|k+1} = z_{l+1|k}^*, \quad  \alpha_{l|k+1} = \alpha_{l+1|k}^*, \quad l\in \mathbb{N}_{0}^{N-1}, \\
	v_{j|k+1} = v_{j+1|k}^*, \quad j\in \mathbb{N}_{0}^{N-2}, \\
	z_{N|k+1} = (A+BK)z_{N|k}^*, \quad \alpha_{N|k+1} = \alpha_{N|k}^*, \quad v_{N-1|k+1} =  Kz_{N|k}^*,
	\end{split}
	\end{align}
	with an equivalent shift in the corresponding S-procedure variables and cost bounds in $ R_{k} $. This is because the candidate sets $ \mathcal{X}_{l|k+1}:=\mathcal{X}_{l+1|k}^* $ give a feasible state tube for $ l\in \mathbb{N}_{0}^{N-1} $. \resp{The candidate solution for $v_{N-1|k+1}$ is feasible because $z_{N|k}\in \mathcal{X}_T$ and the terminal set $\mathcal{X}_T$ satisfies \eqref{eq:Constr_off}.} In addition, Proposition \ref{Prop:Invariance} can then be used to state that $ \mathcal{X}_{N|k+1} ((A + BK) z_{N|k}^*, \alpha_{N|k}^* ) $ satisfies the set dynamics and lies inside $ \mathcal{X}_{T} $. 
	Thus, by induction, the optimization problem \eqref{eq:OnlOptiProb} is feasible and the state trajectory remains in the set $ \mathbb{X}_{N}(\mathcal{X}_T) $ for all $ k>0 $.
	
	The practical stability of the closed loop system is a direct consequence of the compactness of the constraint set $\mathcal{C}$ defined in \eqref{eq:Constraints} and the robust positive invariance of the set $ \mathbb{X}_{N}(\mathcal{X}_T) $. This is because, the constant $ c>0 $ in \eqref{eq:ISpS} can be chosen as $ \max_{x\in \mathbb{X}_{N}(\mathcal{X}_T) \} } \norm{x} $. 	
\end{proof}

The bound on the state trajectory obtained in Theorem \ref{Th:RecFeasStab} was found to be quite conservative in simulation studies, and \resp{future research should focus} on finding a tighter bound. Under additional assumptions, the trajectory can be shown to converge to the origin \resp{ as shown in the following proposition, whose proof is given in Appendix \ref{Ap:StabProof}}. 

\begin{proposition}\label{Prop:Stability}
	Let the offline optimization problem \eqref{eq:OfflineOpt} have a feasible solution. For any system satisfying the dynamics \eqref{eq:Dynamics}-\eqref{eq:wBound} and for any initial condition $ \hat{x}_0 \in\mathbb{X}_{N}(\mathcal{X}_T) $, let the ETMPC controller defined in Algorithm \ref{Alg:ETMPC} be applied for time steps $ k\in\mathbb{N}_{0}^{t_1}, t_1>0 $ in a receding horizon manner. If the control inputs for the time steps $ k>t_1 $ are computed according to the parameterization \eqref{eq:InputParameterization} such that $ u_{k} = u_{k-t_1|t_1}^* $, the state trajectory converges to the terminal set $ \mathcal{X}_T $. Moreover, if the disturbance affecting the system satisfies $ w_k= 0 $ for all $ k>t_2>t_1 $, the state trajectory exponentially converges to the origin. 
\end{proposition}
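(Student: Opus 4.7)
The plan is to split the argument into two phases corresponding to the two claims. In the first phase I would show that $\hat{x}_k$ enters $\mathcal{X}_T$ within $N$ steps after $t_1$ and then remains there, and in the second phase I would invoke the contractivity bound \eqref{eq:Contractivity_cond} to obtain exponential decay in the $P$-norm once the disturbance has vanished.

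For the first phase, by Theorem~\ref{Th:RecFeasStab} the problem \eqref{eq:OnlOptiProb} is feasible at $k=t_1$; I would denote its optimal solution with a superscript $(^{*})$. The open-loop application $u_k = u^{*}_{k-t_1|t_1}$ for $k\in\{t_1,\ldots,t_1+N-1\}$, combined with Proposition~\ref{Prop:Inclusion}, gives the one-step inclusion $\hat{x}_{t_1+l+1}\in\mathcal{X}^{*}_{l+1|t_1}$ whenever $\hat{x}_{t_1+l}\in\mathcal{X}^{*}_{l|t_1}$. Since the initial-condition constraint \eqref{eq:InitCond} places $\hat{x}_{t_1}\in\mathcal{X}^{*}_{0|t_1}$, a routine induction on $l$ yields $\hat{x}_{t_1+l}\in\mathcal{X}^{*}_{l|t_1}$ for every $l\in\mathbb{N}_0^N$, and the terminal constraint \eqref{eq:RMPC7} then forces $\hat{x}_{t_1+N}\in\mathcal{X}_T$. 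For $k\ge t_1+N$ the parameterization \eqref{eq:InputParameterization} dictates $u_k=K\hat{x}_k$, and the robust positive invariance \eqref{eq:Invariance_normal} established in the proof of Proposition~\ref{Prop:Invariance} ensures $\hat{x}_k\in\mathcal{X}_T$ for all subsequent $k$, settling the first claim.

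For the second phase, I would set $k^{\star}=\max(t_1+N,\,t_2+1)$. For every $k\ge k^{\star}$ one has $\hat{x}_k\in\mathcal{X}_T$, $u_k=K\hat{x}_k$, and $w_k=0$, so that $q_k=(C_q+D_uK)\hat{x}_k$ and $p_k=\Delta_k q_k$ with $\Delta_k\in\mathcal{D}$ automatically satisfy the quadratic premise of \eqref{eq:Contractivity_cond}. That condition then gives $\norm{\hat{x}_{k+1}}_P\le\lambda_c\norm{\hat{x}_k}_P$ with $\lambda_c\in(0,1)$, which iterates to $\norm{\hat{x}_{k^\star+n}}_P\le\lambda_c^{\,n}\norm{\hat{x}_{k^\star}}_P$ for every $n\ge 0$. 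Equivalence of norms on $\mathbb{R}^{n_x}$ would then convert this into exponential convergence of $\hat{x}_k$ to the origin in the Euclidean norm.

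The main obstacle I anticipate is the intermediate interval $k\in(t_1+N,\,t_2]$, during which $w_k$ may still be nonzero: here the state cannot be shown to shrink, only to remain in $\mathcal{X}_T$. I would handle this by appealing to the robust positive invariance in Proposition~\ref{Prop:Invariance}, which holds for every admissible realization $(\Delta,w)\in\mathcal{D}\times\mathcal{W}$. A secondary subtlety is that Proposition~\ref{Prop:Inclusion} is formulated one prediction step at a time, so the induction along the horizon must be carried out carefully; this step is routine because the true realizations $\Delta_{t_1+l}$ and $w_{t_1+l}$ lie in $\mathcal{D}$ and $\mathcal{W}$ by assumption.
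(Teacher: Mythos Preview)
Your proposal is correct and follows essentially the same route as the paper's proof: both arguments appeal to Proposition~\ref{Prop:Inclusion} together with Proposition~\ref{Prop:Invariance} to drive the state into $\mathcal{X}_T$ within $N$ steps after $t_1$ and keep it there, and then invoke the contractivity bound~\eqref{eq:Contractivity_cond} from the threshold $\max\{t_1+N,t_2\}$ onward to obtain the exponential decay in the $P$-norm. Your treatment is in fact slightly more explicit than the paper's (the induction along the horizon and the intermediate interval $(t_1+N,t_2]$ are spelled out), but the underlying ideas are identical.
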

\section{Simulation examples}\label{Sec:Simulations}
In this section, the ETMPC Algorithm \ref{Alg:ETMPC} will be applied on two different examples. The first example highlights the scalability of the approach using mass-spring-damper systems by increasing the number of masses, springs and dampers. In the second example, an ETMPC controller is designed for a quadrotor whose mass is uncertain. 
 
\subsection{Mass-spring-damper example}\label{Sec:MSD}
As a first example, we consider a system consisting of $ n_m $ masses, which are connected along a line with springs and dampers. That is, each mass is connected to the previous and the next mass by a spring and a damper, except for the masses at the ends which are connected to only one other mass. For this example, individual simulations are performed for $ n_m \in \{3,5,10,15,20,25\} $. In each simulation, the system is modeled using the displacement of the masses from their equilibrium and the velocity of each mass as states of the system. The control inputs for the system are forces acting on each mass. All the masses have the same known value of 1$ \mathrm{kg} $, whereas the spring constants and damping coefficients have an uncertainty of $ \pm 10\% $ around known nominal values. The values of the nominal spring constants are in the range $ [0.7,0.9] \:\mathrm{Nm^{-1}} $ and the nominal damping coefficients in the range $ [0.3,0.7] \: \mathrm{Nsm^{-1}} $ for all the systems. The true values of these parameters, which are unknown to the controller, are chosen for simulation purposes in the interval specified by $ \pm 10\% $  of the nominal value. In addition, an exogenous disturbance is acting on the velocity state of each mass, and is bounded by $ w_b = 0.05 $. The exogenous disturbance is also generated using a uniform distribution within the specified bounds. Thus, the system is modeled with $ n_x{=} 2n_m $ states, $ n_u {=} n_m $ inputs, $ n_\Delta {=} 2n_m $ uncertain parameters and $ n_w =  n_m $ exogenous disturbances. The dynamics of the system can be modeled using the chosen model structure \eqref{eq:Dynamics}-\eqref{eq:wBound} by discretizing the continuous-time dynamics for a mass-spring-damper system using Euler discretization and a sampling time $ T_s = 0.3\mathrm{s} $.  The following matrices describe the system dynamics for $ n_m =3 $. The matrices for other values of $ n_m $ can be similarly described, and are not reproduced here.
\begin{align}\label{eq:MSD_Dyn}
\begin{split}
A &= \left[ \arraycolsep=4pt \begin{array}{cccccc}
1 & T_s & 0  & 0 & 0 & 0 \\
{-}k_{12}T_s & {-}c_{12}T_s + 1 & k_{12}T_s & c_{12}T_s & 0 & 0 \\ 
0 & 0 & 1  & T_s & 0 & 0 \\
k_{12}T_s & c_{12}T_s & ({-}k_{12}{-}k_{23})T_s & ({-}c_{12}{-}c_{23})T_s + 1 & 0 & 0 \\
0 & 0 & 0  & 0 & 1 & T_s \\
0 & 0 & k_{23}T_s & c_{23}T_s & -k_{23}T_s & -c_{23}T_s + 1
\end{array}\right], \quad
B = 
\begin{bmatrix}
0 & 0 & 0 \\
T_s & 0 & 0 \\
0 & 0 & 0 \\
0 & T_s & 0 \\
0 & 0 & 0 \\
0 & 0 & T_s
\end{bmatrix}, \\
B_p &= \left[ \begin{array}{cccc}
0 & 0 & 0 & 0  \\
k_{12}k_uT_s & c_{12}c_uT_s  \\
0 & 0 & 0 & 0 \\
-k_{12}k_uT_s & -c_{12}c_uT_s & k_{23}k_uT_s & c_{23}c_uT_s \\
0 & 0 & 0 & 0 \\
0 & 0 & -k_{23}k_uT_s & -c_{23}c_uT_s
\end{array} \right] , \quad 
C_q = 
\begin{bmatrix}
{-}1 & 0 & 1 & 0 & 0 & 0 \\
0 & {-}1 & 0 & 1 & 0 & 0 \\
0 & 0 & {-}1 & 0 & 1 & 1 \\
0& 0 & 0 & {-}1 & 0 & 1  \\
\end{bmatrix}, \quad
B_w = \begin{bmatrix}
0 & 0 & 0 \\
w_b & 0 & 0 \\
0 & 0 & 0 \\
0 & w_b & 0 \\
0 & 0 & 0 \\
0 & 0 & w_b 
\end{bmatrix},
\end{split}
\end{align}
where $ k_{ij}$ and $c_{ij} $ represent the nominal spring constant and damping coefficient for the respective components connecting masses $ i $ and $ j $. Additionally, $ k_u = c_u = 0.1 $ model the $ \pm10\% $ uncertainty in the spring constants and damping coefficients. The matrices $ P_\Delta $ and $ P_w $ are identity matrices of the appropriate size, and $ D_u = D_w = 0 $.  The magnitudes of the states and inputs are bounded by 2. 

\begin{figure}[t]
	\centering
	\includegraphics[scale=0.7]{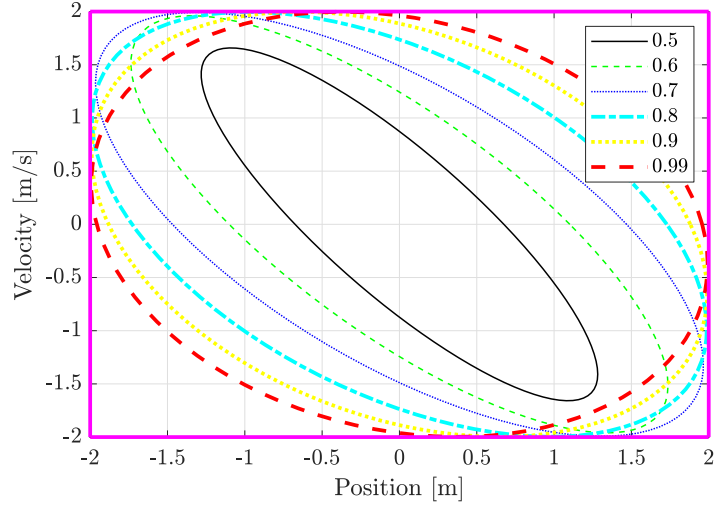}
	\caption{\resp{Projection of terminal sets on the Position-Velocity plane of mass 1 for $ n_m = 10 $ and different values of $\tau_{1,O}$ (indicated in the legend). The state constraints are shown in pink. } } 
	\label{fig:ellipsoids}
\end{figure}
The system is initialized with each mass having its position at $1.7 \mathrm{m}$ and velocity as $0.5 \mathrm{ms^{-1}} $. The cost matrices are chosen using diagonal cost matrices $ Q_x,Q_u $ with a cost of $ 1 $ for the position states and the control inputs, and $ 0.1 $ for the velocity states. Using the aforementioned matrices, Algorithm \ref{Alg:ETMPC} is applied to the system. The offline optimization problem is solved by gridding the variable
\resp{	$ \tau_{1,O} $ between $ (0,1) $ with a grid spacing of 0.1. The terminal sets obtained with different choices for $ \tau_{1,O}$ are shown in Figure \ref{fig:ellipsoids} as a projection on the plane representing the position and velocity states of the first mass. It is observed that by varying  $ \tau_{1,O}$, many different terminal sets can be obtained. The size of the terminal sets increases with $ \tau_{1,O}$. This can be motivated by the effect of reducing $ \tau_{1,O}$ on \eqref{eq:Invariance1}. At low values of  $ \tau_{1,O}$, the feedback gain $ K $ must be large to ensure that  \eqref{eq:Invariance1} holds. In such cases, for example for $\tau_{1,O}= 0.5$ in Figure \ref{fig:ellipsoids}, the input constraints limit the size of the terminal set. As the value of $ \tau_{1,O}$ increases, the terminal set is limited by the state constraint size. However, when the value of $ \tau_{1,O}$ is close to 1, the term $ \tau_{3,O}$ has smaller bound due to \eqref{eq:Invariance2}, and could result in infeasibility of the offline optimization problem. 	According to program \eqref{eq:OfflineOpt}, the terminal set design with the largest volume is selected for the ETMPC algorithm.}  The semidefinite programs in the offline and online optimization problems \eqref{eq:OfflineOpt} and \eqref{eq:OnlOptiProb} were implemented using YALMIP\cite{Lofberg2004} and solved using MOSEK\cite{mosek} on a laptop equipped with an Intel i7-8550 1.8GHz processor.

The prediction horizon for the online optimization problem was chosen as $ N=8 $ timesteps. The ETMPC controller was applied to the system for 20 timesteps in a receding horizon manner. The closed loop trajectories for the simulation with $ n_m = 15 $ masses are plotted in Figure \ref{fig:MSD15}. It can be seen that all the states are regulated close to the origin without constraint violations. Note that the position of the system does not reach the origin due to the model mismatch. In order to clearly illustrate the state tube evolution, a second simulation was performed where mass 1 is initialized at  \resp{ 
$ [-1.5\mathrm{m},-1.4\mathrm{ms^{-1}}] $, mass 2 initialized at $ [1.2\mathrm{m},1\mathrm{ms^{-1}}] $ } and all the other masses initialized at the origin. The closed loop trajectory of the system is projected onto the plane with positions and velocities of masses 1 and 2 and shown in Figure \ref{fig:MSD15_tube}.  In addition, the projection of the state tube computed at time $ k=0 $ and the centers of the ellipsoidal sets are also shown. It can be seen that all the ellipsoidal sets lie within the constraint set, thereby ensuring robust constraint satisfaction. 


Finally, the average offline and online computation times are reported in Table \ref{Tab:CompTimes} as a function of the number of states. The offline computation time is the average time taken to solve \eqref{eq:OfflineOpt} for each value of $ \tau_{1,O} $ in the chosen grid.  The time taken to solve \eqref{eq:OnlOptiProb} at each time step is averaged over the time horizon and reported as the online computation time in Table \ref{Tab:CompTimes}. For the systems with 20 states and higher, the online computation time is higher than the sampling time, and thus requires a faster processor to implement the ETMPC algorithm. However, the increase of offline and online computation times with the number of states is lower than observed in polytopic tube MPC approaches, where the size of the online optimization problem grows combinatorially with the state dimension. Moreover, the offline design is also flexible and scalable, in contrast to iterative procedures proposed in literature\cite{kolmanovsky1998theory} which require higher computation times and result in a large number of constraints defining the state tube when the state dimension is large.

\begin{figure}[t]
	\centering
	\includegraphics{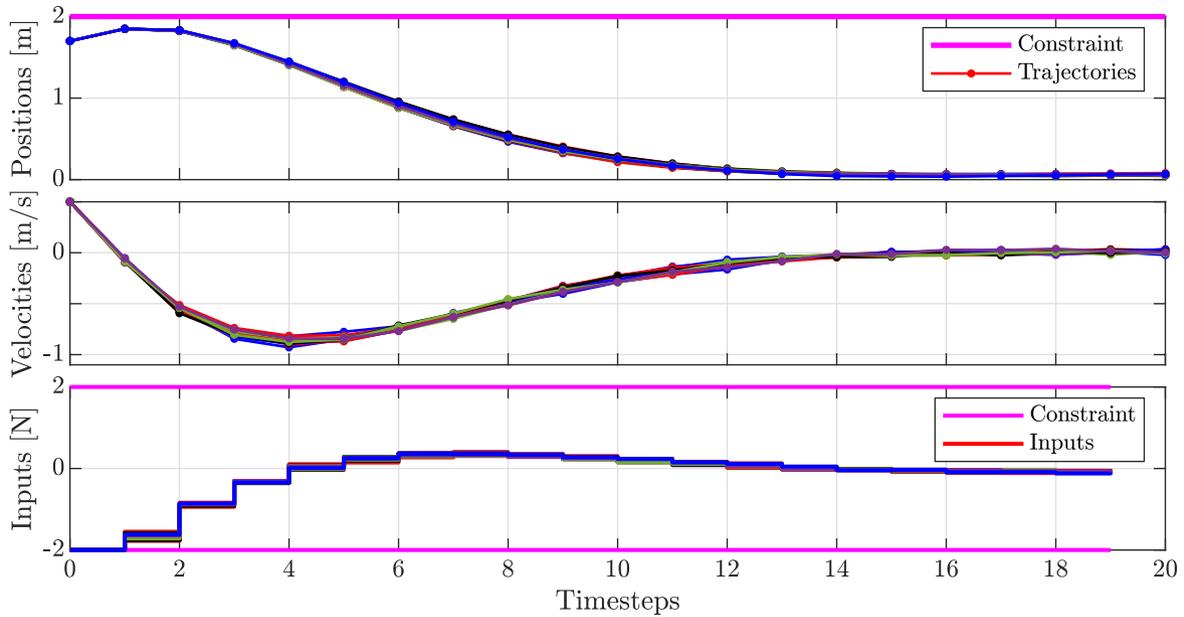}
	\caption{Closed loop trajectories of mass-spring-damper system with ETMPC controller for $ n_m = 15 $.} 
	\label{fig:MSD15}
\end{figure}

\setlength{\dbltextfloatsep}{1pt}
\begin{figure}[t]
	\centering
	\includegraphics{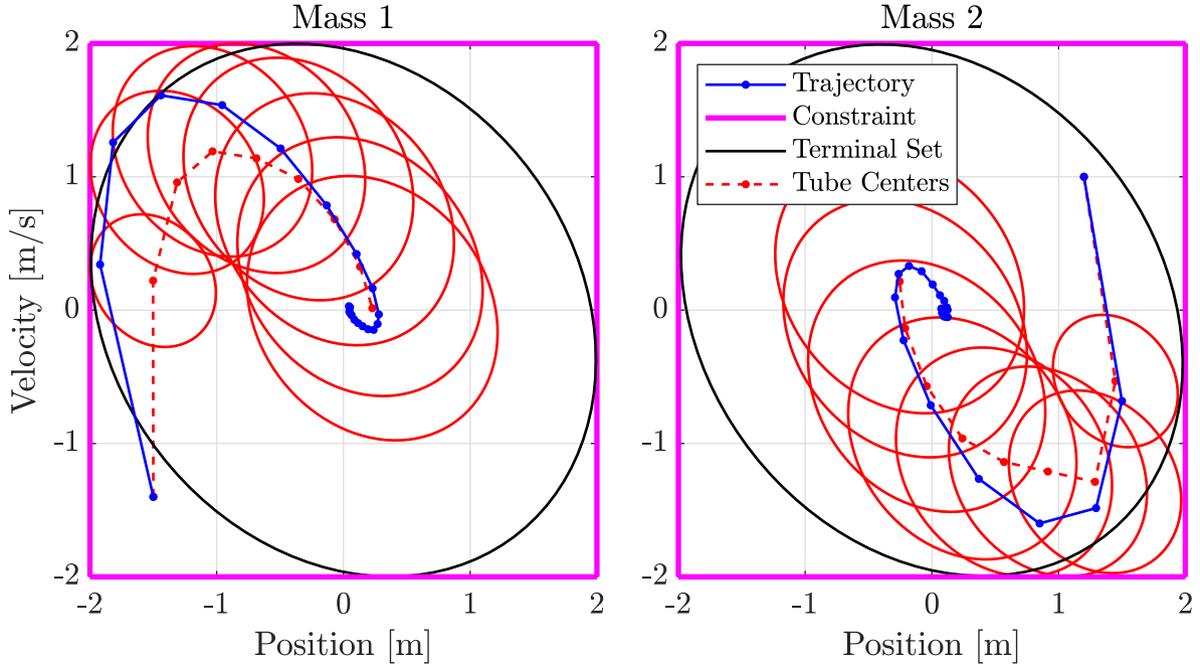}
	\caption{Projection of state tube computed at $ k=0 $ and closed loop trajectories on the Position-Velocity plane for masses 1 and 2 \resp{of the mass-spring-damper system}. } 
	\label{fig:MSD15_tube}
\end{figure}

\begin{table}[t]
	\centering
	\caption{Average offline and online computation times for the mass-spring-damper system}
	
	\label{Tab:CompTimes}
	\begin{tabular}{ |p{2.7cm}|m{0.8cm}|m{0.8cm}|m{0.8cm}|m{0.8cm}|m{0.8cm}|m{0.8cm}|  }
		\hline
		Number of states   & 6 & 10 & 20 & 30 & 40 & 50  \\
		\hline
		Computation time Offline [s] & \resp{0.05} & \resp{0.21} & \resp{1.87} & \resp{12.78} & \resp{33.98} & \resp{109.60} \\ 
		\hline
		Computation time Online [s]  & \resp{0.09} & \resp{0.23} & \resp{0.82} & \resp{2.78} & \resp{5.25} & \resp{8.18}  \\
		\hline
	\end{tabular}
\end{table}

\subsection{Quadrotor example}

In the second example, the ETMPC algorithm is used to design a controller for a quadrotor. The quadrotor considered for simulations is a miniature Crazyflie whose mass is 27$ \mathrm{g} $ and size is $ 92 \mathrm{mm}\times 92 \mathrm{mm}\times 29 \mathrm{mm} $. This example is motivated by a previous work\cite{didier2021} which considered the design of a polytopic tube MPC controller for a quadrotor with uncertain mass. The main interest in analyzing this system here is that in \cite{didier2021}, it was observed that for this system, the design of polytopic terminal sets which are $ \lambda- $contractive is difficult due to the relatively large state dimension. In contrast, the proposed ETMPC algorithm provides a systematic, optimization-based design procedure to compute invariant sets. 

The dynamics of a quadrotor are nonlinear and can be modeled by 12 states and 4 inputs. \resp{ The states of the system can be partitioned as $ [\mathbf{\Delta}s,\mathbf{\Delta}\dot{s},\psi,\dot{\psi}] $, where $ \psi\in\mathbb{R}^{3} $ denotes the roll, pitch and yaw angles of the quadrotor (in $^\circ$) with respect to an inertial frame of reference. In addition,  $ \mathbf{\Delta}s  \in \mathbb{R}^{3} $ denotes the displacement of the quadrotor (in $\mathrm{m}$) from a target equilibrium position $\hat{s}$. } The 4 control inputs are the thrusts produced by each rotor. For the purpose of this simulation study, the linearized discrete-time dynamics of a quadrotor around an equilibrium point are considered with a sampling time $ T_s = 0.1\mathrm{s} $. The complete description of the nonlinear and linearized dynamics of the quadrotor used for the simulation can be found in \cite{Beuchat2021}. The mass of the quadrotor is uncertain and is known to lie within the bounds $ [27\mathrm{g}, 37\mathrm{g}] $, similar to the package delivery scenario considered in \cite{didier2021}. This uncertainty can be modeled using a scalar perturbation $ \Delta $. In addition, a wind force is modeled as the exogenous disturbance acting on the system. The bound on this force is calculated based on the assumption that the maximum relative velocity the quadrotor will face is $ 2 \mathrm{ms^{-1}} $ in $ x$ and $y $ directions. The thrust that can be produced by each rotor is upper bounded by $  0.157 \mathrm{N} $ and lower bounded by $ 0 $.   \resp{ The position of the quadrotor is constrained to lie  inside a hypercube. The center of this hypercube is centered at the origin of inertial the coordinate system used to define $\hat{s} $. The origin is located $0.7\mathrm{m}$ above the ground. Thus, the constraints on the state variables are given as} 
\begin{align}
\begin{bmatrix}
-0.7I_3 \resp{-\hat{s}}  \\ -10I_3 \\ -90I_3 \\ -90I_3
\end{bmatrix} \le
\begin{bmatrix}
\resp{\mathbf{\Delta}s} \\ \resp{\mathbf{\Delta}\dot{s}} \\ \psi \\ \dot{\psi}
\end{bmatrix} \le \begin{bmatrix}
0.7I_3 \resp{-\hat{s}} \\ 10I_3 \\ 90I_3 \\ 90I_3,
\end{bmatrix}
\end{align}

The control goal is to track a target position setpoint which is an equilibrium point for the quadrotor while ensuring constraint satisfaction. The targets considered in this simulation are \resp{$ \hat{s}_1=[\hat{s}_{xy} ,\hat{s}_{xy},0.4]\tr \mathrm{m} $ and $ \hat{s}_2 = [0,0,0]\tr $, where $\hat{s}_{xy}$ is the target for $x$ and $y$ positions chosen as specified later}. The simulation is performed for 10 seconds, and the target switches from $ \hat{s}_1 $ to $ \hat{s}_2 $ at $ t=5\mathrm{s} $. \resp{The controller is unaware of the change in reference setpoints in advance, and thus must regulate the system to the current setpoint (either $ \hat{s}_1 $ or $ \hat{s}_2 $).} 

 \resp{ Two controllers are designed to perform the above control task. First, Algorithm \ref{Alg:ETMPC} is used to design an ellipsoidal tube MPC controller (ETMPC). In addition, a polytopic tube MPC (PTMPC) controller is also designed to compare  flexibility of design, computation times and control performance. Among existing polytopic tube MPC techniques, the method which is closely related to the proposed approach is the homothetic tube MPC algorithm presented in \cite{langsonRobust}. However, this method requires the knowledge of the vertices of a robust positively invariant polytope, whose computation is difficult due to the combinatorial growth in the number of vertices with respect to the number of states of a system. 
 Instead, a vertex-independent PTMPC algorithm is used here, which was originally proposed in \cite{kohlerLinear} and then applied to quadrotors in \cite{didier2021}. Although this method does not require the computation of vertices, it can result in additional conservatism, as will be shown later. Note that \cite{didier2021} proposes the application of robust \emph{adaptive} MPC to quadrotors. In order to compare the performance to the proposed ETMPC algorithm, adaptation is not performed here. Moreover, because the PTMPC algorithm in \cite{didier2021}  uses a nominal cost, the ETMPC algorithm is also simulated with a nominal cost function for this example. }

 The terminal set is designed separately for each setpoint, and the controller uses the terminal set based on the setpoint it is tracking.  The cost function used is described by the matrix \resp{$ Q_x = \text{diag}\{3,2,3,1,1,1,0.1,0.1,0.1,0.1,0.1,0.1\} $ and $ Q_u = I_4 $.} The prediction horizon is set as $ N=15  $ time steps.

\setlength{\dbltextfloatsep}{1pt}
\begin{figure}[t]
	\captionsetup[subfigure]{font=scriptsize,labelfont=scriptsize}
	\centering
	\begin{subfigure}[b]{0.5\textwidth}
		\centering
		\includegraphics{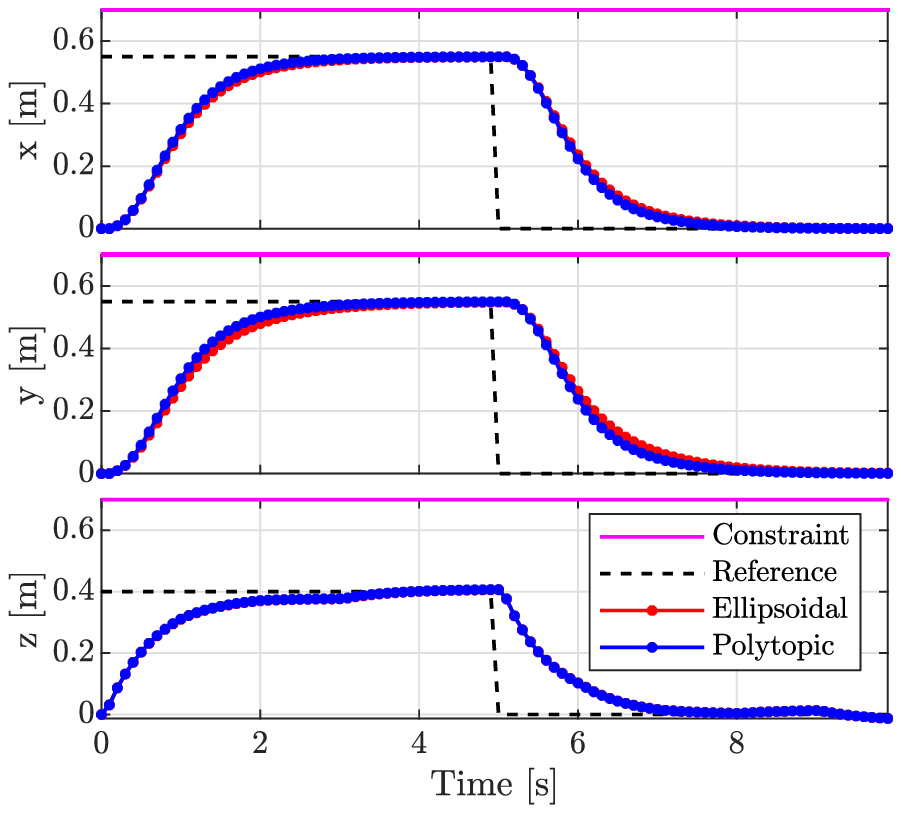}
		\caption{  $\hat{s}_{xy} = 0.55 \mathrm{m}$. } 
		\label{fig:quad_traj}
	\end{subfigure}
	\begin{subfigure}[b]{0.49\textwidth}
		\centering
		\includegraphics{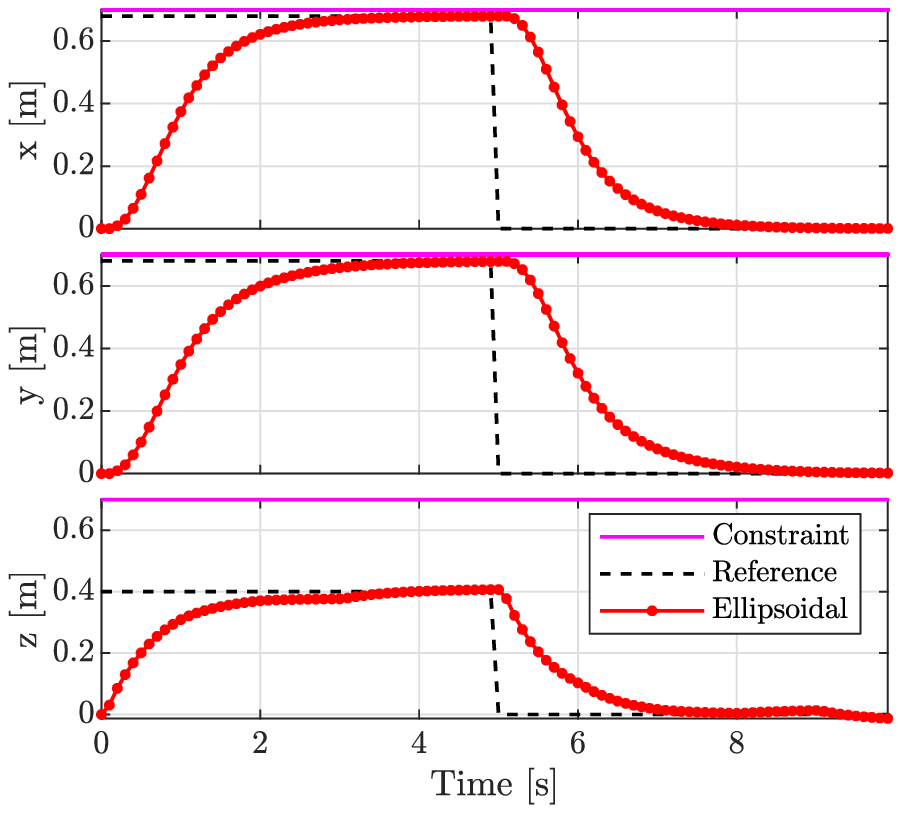}
		\caption{ $\hat{s}_{xy} = 0.68 \mathrm{m}$. } 
		\label{fig:quad_tight}
	\end{subfigure}
	\caption{\resp{Tracking performance of ellipsoidal and polytopic tube MPC algorithms applied to the quadrotor system for different values of $\hat{s}_{xy}$. The polytopic tube MPC algorithm was infeasible for $\hat{s}_{xy} = 0.68$}. }
\end{figure}

\resp{The closed loop trajectories of the system under both the PTMPC and ETMPC controllers are shown in Figure \ref{fig:quad_traj} when $\hat{s}_{xy}$ is chosen as $ 0.55 \mathrm{m}$. It can be seen that the both the controllers track the given reference without any constraint violations, and have similar trajectories. The closed loop costs achieved by the PTMPC and ETMPC algorithms are 121.4 and 109.3 respectively. The difference in the costs is mainly due to the difference in trajectory of the velocity states.
	 The average online computation time of the ETMPC problem is $ 0.16 \mathrm{s}$, and that of the PTMPC controller is $ 0.04 \mathrm{s} $. The reason for the higher computational cost for ETMPC is that solving semidefinite programs is more computationally demanding compared to quadratic programs. However, it was observed that the PTMPC algorithm can be quite conservative in the propagation of state tubes. The largest value of $\hat{s}_{xy}$ for which the PTMPC algorithm is feasible is $ 0.55 \mathrm{m}$, while that for the ETMPC algorithm is $ 0.68 \mathrm{m}$. The closed loop trajectories of the system when   $\hat{s}_{xy}=0.68 \mathrm{m}$ are shown in Figure \ref{fig:quad_tight}.
It can be seen that the ETMPC algorithm tracks the reference trajectory without violating any constraints, and thus is able to track larger references compared to the PTMPC controller.

}

\begin{section}{Conclusion and Outlook}
	In this work, a novel tube-based robust MPC approach was proposed for systems affected by  uncertainty described by a linear fractional transformation and exogenous disturbances. By leveraging mathematical properties of ellipsoids, a homothetic parameterization of the state tube was used to enable a scalable optimization problem online, and a flexible offline design procedure. Convex formulations were derived for desired properties such as set-invariance under tube MPC and contractivity. The number of the online optimization variables scales linearly with respect to the number of states, inputs, uncertainties and prediction horizon of the controller. In addition, the optimization problem is recursively feasible, guarantees robust constraint satisfaction and ensures closed loop stability. Simulation studies demonstrate the scalability of the controller and the ease of offline design\resp{, compared to state-of-the-art polytopic tube MPC methods}.
	
	\resp{Two interesting research directions to improve the proposed algorithm are discussed next. The computational complexity of the proposed algorithm can be reduced by simplifying the linear matrix inequalities using outer approximations of the tube inclusions, as suggested for polytopic tube MPC in \cite{kohlerLinear}. The approximations need to be designed such that the resulting conservatism is minimal compared to the computational performance improvement. }  The proposed algorithm also shows potential to be used in recent learning based MPC techniques, by combining the ETMPC controller with online parameter estimation \cite{lorenzen2019robust,parsi2022}, reinforcement learning \cite{zanon2020safe} or distributed identification \cite{parsi2021distributed}. In particular, the flexible offline design in the proposed method combined with learning the model uncertainty could enable online updates of the control gain and terminal sets, which is not done in most existing schemes due to their complex offline design phase.
	
\end{section}

\section*{Acknowledgments}
This work was supported by the Swiss National Science Foundation under Grant 200021\_178890. The authors would like to thank Ahmed Aboudonia for the insightful discussions on linear matrix inequalities.

\bibliography{biblio_ETMPC}
\resp{
\appendix 
\section{Proofs} \label{Ap:Proofs}

\subsection{Proof of Lemma \ref{Lem:StateCons}} \label{Ap:StateConsProof}

	The reformulation of the state and input constraints uses a similar approach proposed in \cite{lorenzen2019robust}. Substituting the parameterization of the control input and the state tube into \eqref{eq:RMPC6}, it can be written as
	\begin{subequations}\label{eq:Constr_Refo}
		\begin{align}
		Fx_{l|k} + G(Ke_{l|k} + v_{l|k}) &\le \mathbf{1},  \quad \forall e_{l|k} \in \{e|e\tr P e \le \alpha_{l|k}^{2}\}, \\
		\iff Fz_{l|k} + G v_{l|k} + \alpha_{l|k} (F+GK) e &\le \mathbf{1},  \quad \forall e \in \bar{\mathcal{X}},  \\
		\impliedby Fz_{l|k} + G v_{l|k} + \alpha_{l|k} \bar{f} &\le \mathbf{1} \label{eq:Constr_Refo3}.
		\end{align}
	\end{subequations}

\subsection{Proof of Proposition \ref{Prop:TermCons}} \label{Ap:TermConsProof}

	Using the ellipsoidal terminal set \eqref{eq:TermSetPara}, the terminal constraint \eqref{eq:RMPC7} can be written as
	\begin{subequations}\label{eq:Term_Refo}
		\begin{align}
		& \quad \mathcal{X}_{N|k} \subseteq \mathcal{X}_{T} \\
		&\iff (z_{N|k}+ e)\tr P (z_{N|k}+ e) -1 \le 0,\quad \forall e: e\tr P e -\alpha_{N|k}^{2} \le 0 \label{eq:Term_Refo2} \\
		&\iff \exists \tau_{1,T}^{-1} \ge 0 \: \text{s.t.} \: (z_{N|k}+ e)\tr P (z_{N|k}+ e) -1  - \tau_{1,T}^{-1} \left(e\tr Pe -\alpha_{N|k}^{2} \right) \le 0 \label{eq:Term_Refo3}\\
		& \iff \begin{bmatrix}
		P - \tau_{1,T}^{-1}P & Pz_{N|k}\\
		\star & \tau_{1,T}^{-1} \alpha_{N|k}^{2} - 1 + z_{N|k}\tr P z_{N|k} 
		\end{bmatrix} \preccurlyeq 0 \label{eq:Term_Refo4} 
		\\
		& \iff \begin{bmatrix}
		-\tau_{1,T}^{-1}P & 0 & 0 & I_{n_x} \\
		\star & -1 & \alpha_{N|k} & z_{N|k}\tr \\
		\star & \star & -\tau_{1,T} & 0 \\
		\star & \star & \star & -P^{-1}
		\end{bmatrix} \preccurlyeq 0 \label{eq:Term_Refo5}.
		\end{align}
	\end{subequations}
	In the above reformulation, S-procedure is first used to rewrite the terminal constraint as \eqref{eq:Term_Refo3} using the constant $ \tau_{1,T}^{-1} $. Note that \eqref{eq:Term_Refo3} is necessary for the satisfaction of \eqref{eq:Term_Refo2} only if the set $ \mathcal{X}_{N|k} $ is non-empty, which is satisfied by design. Lemma \ref{Lem:Quadr} is then used to convert the quadratic form in $ e $ into the matrix inequality \eqref{eq:Term_Refo4}. The Schur complement lemma is then used to obtain \eqref{eq:Term_Refo5} from \eqref{eq:Term_Refo4}, and \eqref{eq:TermConsRefo} is obtained by pre- and post-multiplying \eqref{eq:Term_Refo5} by the matrix $ \text{diag} \{\tau_{1,T}I_{n_x},1, 1,I_{n_x}\}. $ 


\subsection{Proof of Proposition \ref{Prop:P_Cdesign}} \label{Ap:P_Cdesignproof}
In order to show the cost bound \eqref{eq:P_Cbound} holds when disturbance is absent (i.e., $w_l=0$ for all $l>k$), the following inequality is used 
\begin{align}\label{eq:PCdesign1}
x_{l+1}\tr P_C x_{l+1} + x_l \tr (Q_x + K\tr Q_u K)x_l - x_l\tr P_C x_l \le 0.
\end{align}
First, it can be seen that by summing \eqref{eq:PCdesign1} from $l=k$ to infinity, 
\begin{align}\label{eq:PCdesign2}
x_{\infty}\tr P_C x_{\infty}  - x_k\tr P_C x_k + \sum_{l=k}^{\infty} x_l \tr (Q_x + K\tr Q_u K)x_l &\le 0\\
\iff  \sum_{l=k}^{\infty} x_l \tr (Q_x + K\tr Q_u K)x_l &\le  x_k\tr P_C x_k.
\end{align}
The above condition holds because in the absence of disturbance, the dynamics are $\lambda$ contractive in the terminal set, and the state exponentially reaches the origin.  Then, by using Lemmas \ref{Lem:Quadr} and \ref{Lem:S_Proc}, and introducing positive constants $\{\tau_{4,O,j}\}_{j=1}^{\delta}$, \eqref{eq:P_Cdesign} is a sufficient condition for \eqref{eq:PCdesign1}.

\subsection{Proof of Proposition \ref{Prop:Stability}}\label{Ap:StabProof}
\begin{proof}
	The first result is guaranteed by Propositions \ref{Prop:Inclusion} and \ref{Prop:Invariance}. This is because the tube inclusions \eqref{eq:InclusionRefo} ensure that the terminal set is reached within $ N $ timesteps for all possible disturbances and perturbations when the input parameterization \eqref{eq:InputParameterization} is used. In addition, Proposition \ref{Prop:Invariance} ensures that the terminal set is robust positively invariant under the terminal control law, thereby ensuring that the state trajectory remains in the terminal set. 
	
	Secondly, if the disturbance affecting the system satisfies $ w_k= 0 $ for all $ k>t_2>t_1 $, let $ t_3 = \max \{t_1+N,t_2\} $. Consider the state evolution for any time step satisfying $ k>t_3 $, for which
	\begin{align}
	\norm{x_{k+1}}_P = \norm{(A+BK)x_k + B_p p_k}_P \le \lambda_c \norm{x_k}_P \le  \lambda_c^{k-t_3+1} \norm{x_{t_3}}_P,
	\end{align}
	where the inequality is obtained using \eqref{eq:Contractivity_cond}. Because $ \lambda_c \in (0,1) $ is satisfied by design, the state of the system exponentially converges to the origin.	
\end{proof}

}
\end{document}